\documentclass[11pt]{article}
\usepackage{authblk,caption}
\usepackage{amsmath,amsthm,amssymb,mathtools,bm}
\usepackage{indentfirst,booktabs,microtype,xcolor}
\usepackage[
    backend=bibtex,
    style=alphabetic,
    giveninits=true,
    maxbibnames=99,
    minalphanames=3,
    date=year
]{biblatex}
\usepackage{hyperref}

\let\originalleft\left
\let\originalright\right
\renewcommand{\left}{\mathopen{}\mathclose\bgroup\originalleft}
\renewcommand{\right}{\aftergroup\egroup\originalright}

\hypersetup{
    breaklinks=true,
    colorlinks=true,
    citecolor=blue,
    urlcolor=blue,
    linkcolor=blue
}

\renewbibmacro{in:}{}
\AtEveryBibitem{\clearfield{issue}}

\newbibmacro{string+doiurl}[1]{%
    \iffieldundef{doi}{%
        \iffieldundef{url}{%
            #1%
        }{%
            \href{\thefield{url}}{#1}%
        }%
    }{%
        \href{https://doi.org/\thefield{doi}}{#1}%
    }%
}
\DeclareFieldFormat{title}{%
    \usebibmacro{string+doiurl}{\mkbibemph{#1}}%
}
\DeclareFieldFormat[
    article,thesis,incollection,inproceedings,manual
]{title}{%
    \usebibmacro{string+doiurl}{#1}%
}
\newcommand{\ket}[1]{\left| #1 \right>}
\newcommand{\bra}[1]{\left< #1 \right|}
\newcommand{\ketbra}[2]{\ket{#1} \! \bra{#2}}
\newcommand{\pure}[1]{\ketbra{#1}{#1}}
\newcommand{\tr}[2][]{\operatorname{Tr}_{#1}\left[#2\right]}
\newcommand{\Tr}{\operatorname{Tr}}
\newcommand{\binh}{h_{\mathrm{bin}}}
\newcommand{\defvar}{\coloneqq}
\newcommand{\dop}[1]{\operatorname{D}_{#1}}
\newcommand{\PPT}{\operatorname{PPT}}
\newcommand{\tcop}{\operatorname{T}_+}
\DeclareMathOperator{\SEP}{SEP}
\newcommand{\eps}{\varepsilon}
\newcommand{\id}{\mathbb{I}}
\newcommand{\idmap}{\operatorname{id}}
\newcommand{\norm}[1]{\left\lVert#1\right\rVert}
\newcommand{\CPTP}{\operatorname{CPTP}}
\newcommand{\supp}{\operatorname{supp}}
\newcommand{\term}[1]{\textup{\textbf{#1}}}
\newcommand{\bsym}[1]{\boldsymbol{#1}}
\newcommand{\bd}[1]{\mathcal{N}_{#1}}
\newcommand{\bdc}[1]{\mathcal{N}_{#1}^{c}}
\newcommand{\Pmap}{\mathcal{P}}
\newcommand{\Qt}[1]{Q_{\leftrightarrow}\left(#1\right)}
\newcommand{\Pt}[1]{P_{\leftrightarrow}\left(#1\right)}
\newcommand{\coh}[2]{I_c\left(\bsym{#1}_{#2}\right)}

\newtheorem{theorem}{Theorem}[section]
\newtheorem{lemma}{Lemma}[section]
\newtheorem{corollary}{Corollary}[section]
\theoremstyle{definition}
\newtheorem{definition}{Definition}[section]

\begin{document}

\title{\textbf{No advantage from classical assistance for\\
energy-constrained bosonic dephasing channels}}

\author[1]{Amir Arqand}
\author[2]{Graeme Smith}

\affil[1]{Institute for Quantum Computing and Department of Physics
and Astronomy, University of Waterloo\\
Waterloo, Ontario N2L 3G1, Canada.}

\affil[2]{Institute for Quantum Computing and Department of Applied
Mathematics, University of Waterloo\\
Waterloo, Ontario N2L 3G1, Canada.}

\date{}

\maketitle
\begin{abstract}
We prove that four capacities of single-mode bosonic dephasing channels coincide under an average photon-number constraint: the unassisted quantum capacity, the unassisted private capacity, the two-way classically assisted quantum capacity, and the two-way public-classically assisted private capacity. The assistance may be interactive: the parties can exchange classical messages between channel uses and choose subsequent inputs based on earlier outcomes. For every finite value of the constraint, their common value is the coherent information optimized over Fock-diagonal input states satisfying the constraint. The proof uses a relative entropy comparison with the completely dephasing channel to derive chain rules for the Rains relative entropy and the relative entropy of entanglement. These chain rules give an upper bound on the rates achievable by adaptive protocols whose mean input photon number is constrained on average over all channel uses.
We also show that both chain rules are attained by canonical purifications of finite-energy Fock-diagonal states.
\end{abstract}
\section{Introduction}
\label{sec:introduction}
The quantum and private capacities of a quantum channel quantify its
ability to transmit quantum information and private classical
information, respectively. These rates depend on the resources
available to the communicating parties. Allowing local operations and
two-way public classical communication (LOCC) can increase the rates
at which they distribute entanglement or establish a secret key.
The resulting protocols may be adaptive: the parties retain local
quantum memories and use the outcomes of earlier operations to choose
later inputs. Determining whether such assistance increases the capacities requires
accounting for correlations with the parties' quantum memories and
the intervening LOCC operations~\cite{PLOB17,DSW18}.

Photon loss and dephasing are two important noise processes affecting optical and microwave modes. Photon loss reduces the mean photon number, whereas dephasing preserves the photon-number distribution and reduces coherence between Fock states~\cite{LW23}. The pure-loss channel provides an example in which classical assistance changes the ability to communicate quantum information: its
unassisted quantum capacity vanishes at transmissivities at most
$1/2$, whereas its two-way assisted quantum capacity is positive
for every nonzero transmissivity~\cite{WPGG07,PLOB17}. 
Under an average photon-number constraint, the two-way quantum and secret-key capacities of the pure-loss channel were determined in~\cite{Pir26,FMLG26}. For the quantum-limited amplifier, Pirandola showed that the corresponding capacities coincide with the unassisted quantum and private capacities~\cite{Pir26}.

Following earlier studies of bosonic dephasing~\cite{JC10,AMM20},
Lami and Wilde obtained an exact expression for the quantum and
private capacities of general bosonic dephasing channels without an
input-energy constraint~\cite{LW23}. They showed that the unassisted
and two-way assisted capacities coincide and are equal to the
classical relative entropy of the phase-noise distribution with
respect to the uniform distribution. Their achievability argument
uses inputs uniformly distributed over increasingly many Fock levels,
whose mean photon number grows without bound. The equality of the
unrestricted capacities therefore does not by itself determine
whether assistance helps at fixed input energy.

For bosonic communication, a bound on the mean photon number is a
natural restriction on the allowed inputs. Wilde and Qi established
coding theorems for energy-constrained quantum and private
communication, including a single-letter characterization for
degradable channels satisfying suitable entropy conditions~\cite{WQ18}.
Davis, Shirokov, and Wilde developed the corresponding theory for
two-way assisted communication and derived upper bounds based on
energy-constrained squashed entanglement~\cite{DSW18}. For a Gaussian
phase-noise model, upper and lower bounds on the energy-constrained
LOCC-assisted quantum capacity were obtained in~\cite{AMM23}.
For general bosonic dephasing channels, the remaining assistance
question is whether adaptive protocols can exceed the unassisted
rate while satisfying the same average energy constraint.

In this work, we extend the equality between assisted and unassisted
capacities to finite mean-photon-number constraints. More precisely,
for a channel $\bd{\mu}$ defined
by a phase probability density and every $0\leq E<\infty$, we prove
\begin{align}
    \Qt{\bd{\mu},E}
    =\Pt{\bd{\mu},E}
    =Q(\bd{\mu},E)=P(\bd{\mu},E)=\sup_{\substack{\rho(k)\geq0,\ \sum_{k\geq0}\rho(k)=1\\
                       \sum_{k\geq0}k\rho(k)\leq E}}
          \coh{\rho}{A}.
    \label{eq:introduction-capacities}
\end{align}
Here $\bsym{\rho}_A=(\rho(k))_{k\geq0}$ is the photon-number
distribution, and $\coh{\rho}{A}$ is the coherent information of the
Fock-diagonal input $\rho_A=\sum_{k\geq0}\rho(k)\pure{k}_A$:
\begin{align}
    \coh{\rho}{A}
    =H(\bsym{\rho}_A)
       -H\!\left(\sum_{k\geq0}\rho(k)\pure{e_k}_F\right),
    \label{eq:introduction-coherent-information}
\end{align}
where $\{\ket{e_k}_F\}_k$ are the pure environment states associated
with the Fock inputs. The energy constraint is imposed on the mean
photon number averaged over all channel uses. Thus the result allows
adaptive protocols with different input energies in different rounds.
It gives a single-letter characterization of all four capacities in
terms of the same optimization over photon-number distributions.

The equality of the unassisted capacities follows from degradability
and the energy-constrained coding theorem of Wilde and
Qi~\cite{WQ18}. The main step is therefore to obtain a matching upper
bound for adaptive LOCC-assisted protocols. We begin with a relative
entropy bound comparing the output of the bosonic dephasing channel
with the output of the completely dephasing channel. Combining this
bound with a pinching identity gives a relative entropy chain rule
(Theorem~\ref{thrm:EC chain rule}). We then use this result to show
that the increase in either the Rains relative entropy or the relative
entropy of entanglement across one channel use is at most
$\coh{\rho}{A}$, where $\bsym{\rho}_A$ is the photon-number
distribution induced by the input marginal $\rho_A$
(Theorems~\ref{lem:rains-increment} and~\ref{thrm:REE chain rule}).

These bounds apply even when the channel input is correlated with the
parties' local memories. Applying them successively, together with
monotonicity under the intervening LOCC maps and concavity of the
coherent information, yields the required converse under the average
energy constraint. Furthermore, both entanglement bounds are attained
by the canonical purifications of finite-energy Fock-diagonal inputs
(Corollary~\ref{cor:entanglement-chain-rule-tightness}). The argument
allows countably supported photon-number distributions and does not
require a finite-dimensional restriction on the local memories or
private-state shields.

The structure of the paper is as follows. In
Sec.~\ref{sec:notation}, we introduce the notation, the bosonic
dephasing channel, and the energy-constrained capacities. Section~\ref{sec:EE QP}
contains the main results: Sec.~\ref{sec:rel ent chain} proves a relative entropy bound
for the channel output, Sec.~\ref{sec:QC} derives the chain rules and the quantum capacity
theorem, and Sec.~\ref{sec:private-capacities} establishes the private
capacity theorem and tightness of the entanglement bounds.
Section~\ref{sec:conclusion} concludes with open questions.
\section{Preliminaries}
\label{sec:notation}

Each system $A$ is associated with a separable Hilbert space
$\mathcal H_A$, which may be finite- or infinite-dimensional.
Composite systems are described by tensor products, so that
$\mathcal H_{AB}=\mathcal H_A\otimes\mathcal H_B$.
States are positive trace-class operators with unit trace, and
channels act on trace-class operators. All logarithms are base~$2$.
We use the notation in Table~\ref{tab:notation}.

\begin{table}[h!]
\centering
\caption{List of notation}
\label{tab:notation}
\renewcommand{\arraystretch}{1.25}
\begin{tabular}{@{}p{0.29\linewidth}p{0.66\linewidth}@{}}
\toprule
\textit{Symbol} & \textit{Definition} \\
\midrule
$H(A)_\rho$, $H(\bsym p)$
& Von Neumann entropy of $\rho_A$ and Shannon entropy of $\bsym p$ \\
\addlinespace
$X\geq Y$
& $X-Y$ is positive semidefinite \\
\addlinespace
$\tcop(A)$
& Positive trace-class operators on $\mathcal H_A$ \\
\addlinespace
$\dop{=}(A)$
& $\{\rho\in\tcop(A):\Tr\rho=1\}$ \\
\addlinespace
$\CPTP(A,B)$
& Completely positive, trace-preserving maps from $A$ to $B$ \\
\addlinespace
$\id_A$, $\idmap_A$
& Identity operator and identity channel on $A$, respectively \\
\addlinespace
$\norm{X}_1$, $\norm{X}_\infty$
& Trace norm $\Tr\sqrt{X^\dagger X}$ and operator norm, respectively \\
\addlinespace
$\supp(X)$
& Support of a positive operator $X$ \\
\addlinespace
$A_j^k$
& Composite register $A_j\cdots A_k$ \\
\bottomrule
\end{tabular}
\end{table}

Tensor products with identity channels are often left implicit.
For $\mathcal E\in\CPTP(A,B)$, we write
\begin{align}
    \mathcal E[\rho_{AR}]
    \defvar (\mathcal E\otimes\idmap_R)[\rho_{AR}].
\end{align}
The same convention applies to completely positive maps.
\begin{definition}[Classical--quantum states]
\label{def:classical-quantum}
A state $\rho_{CQ}\in\dop{=}(CQ)$ is \term{classical on $C$}
with respect to a fixed orthonormal basis $\{\ket{c}_C\}_c$ if
\begin{align}
    \rho_{CQ}=\sum_c\rho(c)\pure{c}_C\otimes\rho_{Q|c},
    \label{eq:cq}
\end{align}
where $(\rho(c))_c$ is a probability distribution and
$\rho_{Q|c}\in\dop{=}(Q)$ are the conditional states, chosen
arbitrarily when $\rho(c)=0$. The basis is understood from context.
\end{definition}

More generally, for a state $\rho_A\in\dop{=}(A)$ and a fixed
orthonormal basis $\{\ket{k}_A\}_k$, write
\begin{align}
    \rho(k)\defvar\bra{k}\rho_A\ket{k},
    \qquad
    \bsym{\rho}_A\defvar(\rho(k))_k.
    \label{eq:measurement-distribution}
\end{align}
This is the distribution obtained by measuring $\rho_A$ in that basis;
$\rho_A$ need not be diagonal. For the corresponding completely
dephasing channel $\Delta_A$,
\begin{align}
    \Delta_A[\rho_A]=\sum_k\rho(k)\pure{k}_A,
    \qquad
    H(\bsym{\rho}_A)=H(A)_{\Delta_A[\rho_A]},
\end{align}
where $H(\bsym{\rho}_A)$ denotes Shannon entropy.
For bosonic systems, unless stated otherwise, we use the Fock basis,
with $k\geq0$, so that
$\bsym{\rho}_A=(\rho(0),\rho(1),\ldots)$.

\begin{definition}[Umegaki relative entropy~\cite{Ume62,LS23}]
\label{def:relative-entropy}
For $\rho\in\dop{=}(A)$ and $\sigma\in\tcop(A)$, define
\begin{align}
 D(\rho\Vert\sigma)\defvar
 \begin{cases}
  \Tr[\rho(\log\rho-\log\sigma)],
       &\supp(\rho)\subseteq\supp(\sigma),\\
  +\infty,&\text{otherwise}.
 \end{cases}
 \label{eq:umegaki_div}
\end{align}
The trace expression is understood in the standard extended-real sense
for infinite-dimensional systems, rather than as a difference of two
possibly infinite traces. For classical probability distributions
$\bsym p,\bsym q$, the Kullback--Leibler divergence
$D(\bsym p\Vert\bsym q)$ is obtained by identifying the distributions
with diagonal states.
\end{definition}

\begin{definition}[Rains relative entropy~\cite{Rai01,ADMVW02}]
\label{def:rains-relative-entropy}
For $\rho\in\dop{=}(AB)$, define
\begin{align}
 R(A:B)_\rho\defvar\inf_{\sigma\in\PPT'(A:B)}
 D(\rho_{AB}\Vert\sigma_{AB}),
 \label{eq:rains-relative-entropy}
\end{align}
where
\begin{align}
 \PPT'(A:B)\defvar
 \left\{\sigma\in\tcop(AB):\norm{\sigma^{T_B}}_1\leq1\right\}.
 \label{eq:rains-feasible-set}
\end{align}
Here $T_B$ denotes partial transposition in a fixed orthonormal basis
of $B$. It is defined on Hilbert--Schmidt operators, with
$\norm{\sigma^{T_B}}_1=+\infty$ if the result is not trace class.
The value of $R$ is independent of the choice of this basis.
\end{definition}

\begin{definition}[Relative entropy of entanglement~\cite{VPRK97,LS23}]
\label{def:relative-entropy-of-entanglement}
For a state $\rho_{AB}\in\mathrm{D}_{=}(AB)$, the relative entropy of
entanglement is defined as
\begin{equation}
  E_R(A:B)_\rho
    :=\inf_{\sigma_{AB}\in\SEP(A:B)}
         D(\rho_{AB}\|\sigma_{AB}),
  \label{eq:relative-entropy-of-entanglement}
\end{equation}
where $\SEP(A:B)$ is the trace-norm-closed convex hull of product states
$\sigma_A\otimes\sigma_B$ with $\sigma_A\in\mathrm{D}_{=}(A)$ and
$\sigma_B\in\mathrm{D}_{=}(B)$, and $D$ denotes the quantum relative entropy.
\end{definition}

\subsection{Energy-constrained unassisted capacities}
\label{sec:unassisted-capacities}

\begin{definition}\label{def:energy-constraint}
(Energy constraint)
For a single-mode bosonic input, let
\begin{align}
    \hat n\defvar\sum_{k=0}^{\infty}k\pure{k},
    \qquad
    G_n\defvar\frac1n\sum_{i=1}^{n}
       \id_A^{\otimes(i-1)}\otimes\hat n\otimes\id_A^{\otimes(n-i)}.
    \label{eq:block-energy-observable}
\end{align}
An $n$-use input $\rho_{A_1^n}\in\dop{=}(A_1^n)$ satisfies the
energy constraint $E\geq0$ if
\begin{align}
    \Tr[G_n\rho_{A_1^n}]
       =\frac1n\sum_{i=1}^{n}\Tr[\hat n\rho_{A_i}]\leq E.
    \label{eq:block-energy}
\end{align}
\end{definition}

\begin{definition}\label{def:unassisted-capacities}
(Unassisted capacities~\cite[Sec.~III]{WQ18})
Let $\mathcal N\in\CPTP(A,B)$ be a memoryless channel.
We write
\begin{align}
    Q(\mathcal N,E)\defvar Q(\mathcal N,\hat n,E),\qquad
    P(\mathcal N,E)\defvar P(\mathcal N,\hat n,E)
\end{align}
for the supremum of the achievable vanishing-error rates of quantum
and private classical communication, respectively, under
\eqref{eq:block-energy}. The constraint holds for every encoded logical
input or message, respectively. The equivalent average-input
formulations are given in~\cite[Theorem~1]{WQ18}.
\end{definition}

\begin{definition}\label{def:coherent-private-information}
(Coherent and private information)
Let $\mathcal N^c\in\CPTP(A,F)$ be complementary to
$\mathcal N\in\CPTP(A,B)$. When the output entropies are finite, set
\begin{align}
    I_c(\rho_A,\mathcal N)\defvar
       H(B)_{\mathcal N[\rho_A]}-H(F)_{\mathcal N^c[\rho_A]}.
    \label{eq:coherent-information}
\end{align}
For a finite or countable ensemble $\{p(x),\rho_A^x\}_x$, put
$\xi_{XA}\defvar\sum_xp(x)\pure{x}\otimes\rho_A^x$,
$\omega_{XB}\defvar\mathcal N[\xi_{XA}]$, and
$\omega_{XF}\defvar\mathcal N^c[\xi_{XA}]$, with identity channels
implicit. {Here $I(X:Y)_\omega\defvar
D(\omega_{XY}\Vert\omega_X\otimes\omega_Y)$ is the mutual information.
For $0\leq E<\infty$, define}
\begin{align}
    C_p(\mathcal N,\hat n,E)\defvar
       \sup_{\substack{\{p(x),\rho_A^x\}_x\\
           \Tr[\hat n\sum_xp(x)\rho_A^x]\leq E}}
          \bigl\{I(X:B)_\omega-I(X:F)_\omega\bigr\}.
    \label{eq:private-information}
\end{align}
\end{definition}

\begin{lemma}\label{lem:degradable-capacities}
(Energy-constrained capacities of degradable channels~\cite[Theorem~3]{WQ18})
Let $\mathcal N\in\CPTP(A,B)$ have a single-mode bosonic input,
$E\geq0$, and $\mathcal N^c=\mathcal D\circ\mathcal N$ for some
$\mathcal D\in\CPTP(B,F)$. Suppose that
\begin{align}
    \sup_{\rho_A\in\dop{=}(A):\,\Tr[\hat n\rho_A]\leq E}
         H(B)_{\mathcal N[\rho_A]}<\infty.
\end{align}
Then
\begin{align}
    Q(\mathcal N,E)=P(\mathcal N,E)
       =\sup_{\rho_A\in\dop{=}(A):\,\Tr[\hat n\rho_A]\leq E}
            I_c(\rho_A,\mathcal N)
       =C_p(\mathcal N,\hat n,E).
    \label{eq:degradable-capacities}
\end{align}
\end{lemma}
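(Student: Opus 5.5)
This lemma is cited from~\cite[Theorem~3]{WQ18}, so the plan is to reassemble that argument rather than build a new one. There are two directions. Achievability gives $Q\geq\sup I_c$ and $P\geq C_p$ from the energy-constrained coding theorems. The converse collapses everything to the single-letter coherent information using degradability. Concretely, I would establish the chain
\begin{align}
    \sup_{\Tr[\hat n\rho_A]\leq E} I_c(\rho_A,\mathcal N)
    \leq Q(\mathcal N,E)\leq P(\mathcal N,E)\leq C_p(\mathcal N,\hat n,E)
    \leq\sup_{\Tr[\hat n\rho_A]\leq E} I_c(\rho_A,\mathcal N).
\end{align}

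\textbf{Achievability and the middle inequalities.} The first inequality is the energy-constrained LSD theorem. Decompose an admissible $\rho_A$ into a random-coding ensemble; typical-subspace codes are then built from truncated, finite-dimensional versions of $\rho_A$. The entropy-finiteness hypothesis guarantees that the truncations approximate $I_c$ in the limit. This uses lower semicontinuity of entropy together with finiteness of $H(B)$ and of $H(F)$; the latter follows because $\mathcal N^c=\mathcal D\circ\mathcal N$ and both entropies are bounded on the constraint set. The second inequality, $Q\leq P$, is standard: a quantum code transmits private classical messages under the same energy constraint.

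\textbf{Converse.} For $P\leq C_p$, the multi-letter converse bounds any energy-constrained private code rate by $\frac1n C_p(\mathcal N^{\otimes n},G_n,E)$. One then needs additivity: for degradable channels the private information of an ensemble equals the coherent information of its average state, up to the term
\begin{align}
    \sum_x p(x)\,I_c(\rho^x,\mathcal N)\leq I_c\Bigl(\sum_x p(x)\rho^x,\mathcal N\Bigr).
\end{align}
This holds because $I_c$ is concave for degradable channels. Concavity follows by writing $I_c$ as the conditional entropy $H(F'|F)$ of the degrading isometry's output and using strong subadditivity. Then $I_c$ is subadditive on $n$-letter inputs by the degradability chain rule $I_c(\rho_{A_1^n})\leq\sum_i I_c(\rho_{A_i})$. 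Finally, concavity and the average constraint $\frac1n\sum_i\Tr[\hat n\rho_{A_i}]\leq E$ give
\begin{align}
    \frac1n\sum_i I_c(\rho_{A_i})\leq I_c(\bar\rho)\quad\text{with}\quad \bar\rho=\tfrac1n\sum_i\rho_{A_i}.
\end{align}
Since $\bar\rho$ is admissible, the chain closes, and the same argument identifies $C_p$ with $\sup I_c$.

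\textbf{Main obstacle.} The hard step is making these entropy manipulations rigorous in infinite dimensions. The terms $H(B)$ and $H(F)$ must be finite for the differences to be defined. This holds under the energy constraint, since $\sup H(B)<\infty$ gives $H(F)\leq H(B)$ via the degrading map. Strong subadditivity must also be applied with possibly infinite-dimensional registers. I would handle this by working with conditional entropies defined through relative entropy, as in the standard extended-real setting. The continuity needed for the truncation step in achievability would come from the finite-entropy hypothesis. This is the only place where the supremum assumption enters essentially.
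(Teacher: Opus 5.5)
\textbf{Gap in the converse.} The step where you pass from private information to coherent information does not go through as justified. For an ensemble $\{p(x),\rho^x\}_x$ with average $\bar\rho\defvar\sum_x p(x)\rho^x$,
\begin{align*}
    I(X:B)_\omega-I(X:F)_\omega
    =I_c(\bar\rho,\mathcal N)-\sum_x p(x)\,I_c(\rho^x,\mathcal N).
\end{align*}
Bounding this by $I_c(\bar\rho,\mathcal N)$ therefore needs the subtracted term to be \emph{nonnegative}. The concavity inequality you display bounds that term from \emph{above} by $I_c(\bar\rho,\mathcal N)$. That only shows the private information is nonnegative, which is the wrong direction. The missing fact is $I_c(\rho^x,\mathcal N)\geq0$ for every input of a degradable channel. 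This follows from data processing of coherent information under $\mathcal D$; it is the inequality $H(F)\leq H(B)$ you quote in your last paragraph. At the $n$-letter level you apply it to $\mathcal N^{\otimes n}$, which is degradable via $\mathcal D^{\otimes n}$. Concavity is then needed only in the final averaging over channel uses. This nonnegativity is exactly the paper's remark that pure-state refinement cannot decrease private information for a degradable channel.

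Once repaired, your argument gives $Q=P=\sup I_c$ and $C_p\leq\sup I_c$, but not $C_p\geq\sup I_c$. The link $P\leq C_p$ in your chain presupposes that lower bound, so the chain does not close on its own. The lower bound follows from a pure-state decomposition of an admissible $\rho_A$: every subtracted term vanishes, because $H(B)=H(F)$ for pure inputs.

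\textbf{Comparison with the paper.} Otherwise your route differs from the paper's only in that you try to re-derive Theorem~3 of~\cite{WQ18}. The paper cites that theorem and checks only its standing hypothesis that $\hat n$ is a Gibbs observable, $\Tr[2^{-\beta\hat n}]=(1-2^{-\beta})^{-1}<\infty$ for all $\beta>0$. You never verify this, although any appeal to the energy-constrained coding theorem requires it.

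If you want a self-contained achievability sketch, two points need more care.
First, lower semicontinuity of entropy controls only the $H(B)$ term of a truncated input. The subtracted $H(F)$ term needs an upper bound. Concavity of entropy supplies it through $\Tr[P_d\rho_A]\,\mathcal N^c[\rho_d]\leq\mathcal N^c[\rho_A]$ for spectral truncations $\rho_d\defvar P_d\rho_AP_d/\Tr[P_d\rho_A]$.
Second, you must absorb the slight energy increase caused by renormalising $\rho_d$, for instance by first mixing $\rho_A$ with the vacuum. You must also ensure that every encoded input, not only the average, meets the constraint. The paper delegates this to the equivalent formulations of~\cite[Theorem~1]{WQ18}.
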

\begin{proof}
The capacity equality follows from~\cite[Theorem~3]{WQ18}, since
$\Tr[2^{-\beta\hat n}]=(1-2^{-\beta})^{-1}<\infty$ for $\beta>0$.
For a degradable channel, pure-state refinement cannot decrease private
information, and the private information of a pure-state ensemble equals
the coherent information of its average state. This gives the equality
with $C_p$.
\end{proof}

\subsection{Bosonic dephasing channels}
\label{sec:bosonic-dephasing}

\begin{definition}\label{def:bosonic-dephasing}
(Bosonic dephasing channel~\cite[Eqs.~(1)--(3)]{LW23})
Let $\mu$ be a probability distribution on $[-\pi,\pi]$ with density $p$, so that
$p(\phi)\geq0$ and $\int_{-\pi}^{\pi}p(\phi)\,d\phi=1$.
The bosonic dephasing channel $\bd{\mu}\in\CPTP(A,B)$ is
\begin{align}
    \bd{\mu}[\rho_A]
       &\defvar\int_{-\pi}^{\pi}p(\phi)
          e^{-i\phi\hat n}\rho_A e^{i\phi\hat n}\,d\phi
          \label{eq:bdc-integral}\\
       &=\sum_{k,\ell\geq0}\rho_{k\ell}(T_\mu)_{k\ell}
                            \ketbra{k}{\ell}_B,
          \label{eq:bdc-fock-action}
\end{align}
where $\hat n\defvar\sum_{k\geq0}k\pure{k}$ is the photon number
operator, $\rho_{k\ell}\defvar\bra{k}\rho_A\ket{\ell}$, and
\begin{align}
    (T_\mu)_{k\ell}
       \defvar\int_{-\pi}^{\pi}p(\phi)e^{-i\phi(k-\ell)}\,d\phi.
       \label{eq:bdc-toeplitz}
\end{align}
Here the input and output Fock spaces are identified; $\bd{\mu}$
is denoted by $\bd{p}$ in~\cite{LW23}.
A complementary channel $\bdc{\mu}\in\CPTP(A,F)$ is
\begin{align}
    \bdc{\mu}[\rho_A]
       =\sum_{k\geq0}\rho(k)\pure{e_k}_F,
       \qquad \rho(k)=\rho_{kk},
       \label{eq:bdc-complement}
\end{align}
where the normalized environment vectors $\ket{e_k}_F$ satisfy
\begin{align}
    \langle e_\ell|e_k\rangle=(T_\mu)_{k\ell}.
    \label{eq:environment-overlaps}
\end{align}
\end{definition}
\begin{lemma}
    \label{lemma:BD and CD}
    Let $\bd{\mu}\in\CPTP(A,B)$ be a bosonic dephasing channel, $\Delta\in\CPTP(A,B)$, $\Delta_A\in\CPTP(A,A)$, and $\Delta_B\in\CPTP(B,B)$ 
be completely dephasing channels in the Fock bases. Then,
    \begin{align}
        \label{eq:BD and CD}
        \Delta_B\circ\bd{\mu}=\bd{\mu}\circ\Delta_A=\Delta.
    \end{align}
\end{lemma}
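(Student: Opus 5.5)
The plan is to verify the identity directly on the Fock-basis action \eqref{eq:bdc-fock-action}, using the fact that the diagonal entries of $T_\mu$ equal one. From \eqref{eq:bdc-toeplitz}, $(T_\mu)_{kk}=\int_{-\pi}^{\pi}p(\phi)\,d\phi=1$ for every $k\geq0$. Write $\rho_{k\ell}=\bra{k}\rho_A\ket{\ell}$ for a trace-class input. Identifying the input and output Fock spaces, $\Delta[\rho_A]=\sum_k\rho_{kk}\pure{k}_B$.

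The computation has two parts.

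\emph{First composition.} Since $\Delta_A[\rho_A]$ has matrix entries $\rho_{k\ell}\delta_{k\ell}$, applying \eqref{eq:bdc-fock-action} gives
\begin{align}
    \bd{\mu}\circ\Delta_A[\rho_A]
      =\sum_{k}\rho_{kk}(T_\mu)_{kk}\pure{k}_B
      =\sum_k\rho_{kk}\pure{k}_B .
\end{align}

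\emph{Second composition.} The state $\bd{\mu}[\rho_A]$ has entries $\rho_{k\ell}(T_\mu)_{k\ell}$. The map $\Delta_B$ keeps only the diagonal entries, which are $\rho_{kk}(T_\mu)_{kk}=\rho_{kk}$, so $\Delta_B\circ\bd{\mu}[\rho_A]=\Delta[\rho_A]$ as well.

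To justify using the matrix-entry description, I would rely on the integral form \eqref{eq:bdc-integral}. Each unitary $e^{-i\phi\hat n}$ is diagonal in the Fock basis, so it commutes with the projections $\pure{k}$. Writing $\Delta_B[X]=\sum_k\pure{k}X\pure{k}$, which converges in trace norm for trace-class $X$, and exchanging this sum with the Bochner integral (both are bounded linear operations on trace class) gives the same identities.

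Equality as channels, not merely on inputs $\rho_A$, also needs to hold with the implicit $\idmap_R$ tensored on. This follows by the same calculation on operators $\ketbra{k}{\ell}\otimes X_R$, together with linearity and trace-norm continuity. I expect no real obstacle here; the only care needed is this routine infinite-dimensional continuity justification.
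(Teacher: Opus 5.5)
Your proposal is correct and takes essentially the paper's approach. The paper argues directly from the integral form: phase rotations fix each Fock projector and preserve diagonal entries, so averaging over $\mu$ gives $\Delta$. Your observation $(T_\mu)_{kk}=1$ is the Fock-matrix version of that same fact. Your extra remarks on exchanging the trace-norm sum with the integral, and on tensoring with $\idmap_R$, are routine justifications that the paper leaves implicit.
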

\begin{proof}
Phase rotations preserve diagonal matrix elements in the Fock basis and fix each Fock
projector. Averaging over $\mu$ therefore gives, for every
$\rho_A\in\dop{=}(A)$,
\begin{align*}
    (\Delta_B\circ\bd{\mu})[\rho_A]
    &=\sum_{k\geq0}\bra{k}\rho_A\ket{k}\,\pure{k}_B\\
    &=(\bd{\mu}\circ\Delta_A)[\rho_A]
     =\Delta[\rho_A].\qedhere
\end{align*}
\end{proof}
\begin{lemma}\label{lemma:finiteness}
    Let $\bd{\mu}\in\CPTP(A,B)$ be a bosonic dephasing channel, and let $\rho_A\in\dop{=}(A)$ satisfy
    \begin{align}
        \Tr[\hat{n}\rho_A]\leq E<\infty,
    \end{align}
    where $\hat{n}$ is the number operator. Then, we have
    \begin{align}\label{eq:finiteness}
        \max\left\{H(A)_\rho,H(B)_{\bd{\mu}[\rho_A]},H(F)_{\bdc{\mu}[\rho_A]} \right\}\leq g(E)<\infty,
    \end{align}
    where 
    \begin{align}\label{eq:g(e)}
        g(E)\defvar\begin{cases}
            (E+1)\log(E+1)-E\log E& E>0\\
            0&E=0.
        \end{cases}
    \end{align}
\end{lemma}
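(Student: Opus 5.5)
The plan is to reduce all three entropies to the entropy of a state whose mean photon number is at most $E$, and then to use the standard maximum-entropy bound. The core fact is that among all states $\sigma$ on a single mode with $\Tr[\hat n\sigma]\leq E$, the von Neumann entropy is at most $g(E)$. This bound is attained by the thermal (geometric) state $\gamma_E=\sum_k \frac{E^k}{(E+1)^{k+1}}\pure{k}$. To prove it, for $E>0$ I would use nonnegativity of relative entropy:
\begin{align*}
0\leq D(\sigma\Vert\gamma_E)=-H(\sigma)+\log(E+1)+\Tr[\hat n\sigma]\log\frac{E+1}{E}.
\end{align*}
Since $\log\frac{E+1}{E}>0$ and $\Tr[\hat n\sigma]\leq E$, rearranging gives $H(\sigma)\leq g(E)$. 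The case $E=0$ is separate: it forces $\sigma=\pure{0}$, so $H(\sigma)=0$. This handles $H(A)_\rho$ directly.

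For the output $B$, I would use $\Tr[\hat n\,\bd{\mu}[\rho_A]]=\Tr[\hat n\rho_A]$. This holds because the phase rotations $e^{-i\phi\hat n}$ commute with $\hat n$; equivalently, Lemma~\ref{lemma:BD and CD} shows that $\bd{\mu}$ preserves the diagonal. Hence $\bd{\mu}[\rho_A]$ satisfies the same energy constraint, and $H(B)\leq g(E)$.

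For the environment $F$, the output $\bdc{\mu}[\rho_A]=\sum_k\rho(k)\pure{e_k}$ is a mixture of pure states with weights $\rho(k)$. By the concavity and mixing bound, or more precisely since the entropy of an ensemble average is at most the Shannon entropy of the weights when the components are pure, we get $H(F)\leq H(\bsym\rho_A)$. Also $H(\bsym\rho_A)=H(\Delta_A[\rho_A])$, and $\Delta_A[\rho_A]$ has the same mean photon number as $\rho_A$, so $H(F)\leq g(E)$.

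The main obstacle is justifying these steps in infinite dimensions. Concretely, one needs three things. First, the relative entropy identity must be valid and not of the form $\infty-\infty$. This is fine because $\log\gamma_E$ is diagonal in the Fock basis, so $\Tr[\sigma\log\gamma_E]$ is a convergent series determined by the energy. Second, the bound $H(\sum_k p_k\pure{\psi_k})\leq H(\bsym p)$ must hold for a countable ensemble. I would show this via the purification $\sum_k\sqrt{p_k}\ket{k}\ket{\psi_k}$: the mixed state on $F$ has the same entropy as the reduced state on the index system, which is $\Delta$-majorized by the diagonal. More simply, $H$ of that reduced state is at most $H$ of its diagonal $\bsym p$, by the dephasing property $D(\tau\Vert\Delta\tau)\geq0$. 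Third, monotonicity requires $\log\frac{E+1}{E}>0$.
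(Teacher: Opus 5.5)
Your proof is correct and follows essentially the paper's route: the thermal-state (Gibbs) bound with $\beta=\log(1+1/E)$, energy preservation by $\bd{\mu}$, and $H(F)\leq H(\bsym{\rho}_A)$ for the pure-state mixture $\bdc{\mu}[\rho_A]$. The only difference is that you apply $D(\sigma\Vert\gamma_E)\geq0$ directly to the possibly non-diagonal states $\rho_A$ and $\bd{\mu}[\rho_A]$, which relies on the identity $H(\sigma)+D(\sigma\Vert\gamma_E)=-\Tr[\sigma\log\gamma_E]$ holding in $[0,\infty]$; the paper instead first passes to $\Delta_A[\rho_A]$ using entropy monotonicity under unital channels, and then applies the Gibbs bound only to a classical distribution. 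Your purification argument also justifies the $F$ bound, a step the paper simply asserts.
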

\begin{proof}
    Let $\Delta_A\in\CPTP(A,A)$ and $\Delta_B\in\CPTP(B,B)$ be the completely dephasing channels. First observe that 
    \begin{align}
        \Tr\left[\hat{n}\Delta_A[\rho_A]\right]=\Tr[\hat{n}\rho_A]\leq E.
    \end{align}
    Since $\Delta_A$ is unital, data processing under unital channels~\cite[Theorems~2.2 and~3.3]{LB13} gives
    \begin{align}\label{eq:DPIunital}
        H(A)_\rho\leq H(A)_{\Delta_A[\rho_A]}.
    \end{align}
    For any $\beta>0$, let
    \begin{align}
        \theta_\beta\defvar\frac{2^{-\beta\hat{n}}}{\Tr[2^{-\beta\hat{n}}]}.
    \end{align}
    The Gibbs variational bound~\cite[Eqs.~(160)--(161)]{WQ18} gives
    \begin{align}\label{eq:H(A)}
        H(A)_{\Delta_A[\rho_A]}\leq\beta E+\log\Tr[2^{-\beta\hat{n}}]=\beta E-\log(1-2^{-\beta}).
    \end{align}
    Since the above relation holds for any $\beta>0$, for $E>0$, set
    \begin{align}
        \beta=\log(1+\frac{1}{E}),
    \end{align}
    then, Eq.~\eqref{eq:H(A)} becomes
    \begin{align}\label{eq:H(A) to g}
        H(A)_{\Delta_A[\rho_A]}\leq(E+1)\log(E+1)-E\log E=g(E).
    \end{align}
    For $E=0$, taking $\beta\to\infty$ in~\eqref{eq:H(A)} gives $H(A)_{\Delta_A[\rho_A]}=0$. In either case, combining the resulting bound with~\eqref{eq:DPIunital} yields
    \begin{align}
        H(A)_\rho\leq g(E).
    \end{align}
    The channel $\bd{\mu}$ also preserves mean photon number:
    \begin{align}
         \Tr\left[\hat{n}\bd{\mu}[\rho_A]\right]=\Tr[\hat{n}\rho_A]\leq E.
    \end{align}
    Then
    \begin{align}
        H(B)_{\bd{\mu}[\rho_A]}&\leq H(B)_{(\Delta_B\circ\bd{\mu})[\rho_A]}\nonumber
        \\
        &=H(A)_{\Delta_A[\rho_A]}\nonumber\\
        &\leq g(E),
    \end{align}
    The first inequality again follows from unitality of $\Delta_B$, the equality from~\eqref{eq:BD and CD}, and the last inequality from the bound just established. Finally, the complementary output is
    \begin{align}
        \bdc{\mu}[\rho_A]=\sum_{k\geq0}\rho(k)\ketbra{e_k}{e_k}_F,
    \end{align}
    for some normalized states $\{\ket{e_k}_F\}_{k\geq0}$. Let $\bsym{\rho}_A$ be the distribution induced by $\rho_A$ in the Fock basis, then,
    \begin{align}
        H(F)_{\bdc{\mu}[\rho_A]}\leq H(\bsym{\rho}_A)=H(A)_{\Delta_A[\rho_A]}\leq g(E).
    \end{align}
\end{proof}
For a fixed channel $\bd{\mu}$ and finite mean photon number, write
\begin{align}
    \coh{\rho}{A}
       \defvar I_c(\Delta_A[\rho_A],\bd{\mu})
       =H(\bsym{\rho}_A)
          -H\!\left(\sum_{k\geq0}\rho(k)\pure{e_k}_F\right).
    \label{eq:diagonal-coherent-information}
\end{align}

\begin{corollary}\label{cor:bdc-unassisted-capacities}
(Unassisted capacities of bosonic dephasing channels)
Let $\bd{\mu}\in\CPTP(A,B)$ and ${0\leq E<\infty}$.
For every $\rho_A\in\dop{=}(A)$ with $\Tr[\hat n\rho_A]\leq E$,
\begin{align}
    I_c(\rho_A,\bd{\mu})
       \leq\coh{\rho}{A}.
       \label{eq:fock-diagonal-reduction}
\end{align}
Consequently,
\begin{align}
    Q(\bd{\mu},E)=P(\bd{\mu},E)
       =\sup_{\substack{\rho(k)\geq0,\ \sum_{k\geq0}\rho(k)=1\\
                        \sum_{k\geq0}k\rho(k)\leq E}}
          \coh{\rho}{A}.
       \label{eq:bdc-unassisted-capacities}
\end{align}
\end{corollary}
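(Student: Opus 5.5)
The plan is to prove the pointwise inequality \eqref{eq:fock-diagonal-reduction} first, and then obtain \eqref{eq:bdc-unassisted-capacities} from Lemma~\ref{lem:degradable-capacities}.

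\textbf{Step 1: the pointwise inequality.} Fix $\rho_A$ with $\Tr[\hat n\rho_A]\leq E$. By Lemma~\ref{lemma:finiteness}, all entropies that appear are finite, so $I_c(\rho_A,\bd{\mu})=H(B)_{\bd{\mu}[\rho_A]}-H(F)_{\bdc{\mu}[\rho_A]}$ is well defined.
\begin{itemize}
\item[(i)] \emph{Environment term.} By \eqref{eq:bdc-complement}, $\bdc{\mu}[\rho_A]$ depends only on the diagonal $\bsym{\rho}_A$. Hence $\bdc{\mu}[\rho_A]=\bdc{\mu}[\Delta_A[\rho_A]]$, and the environment entropies of $\rho_A$ and $\Delta_A[\rho_A]$ agree.
\item[(ii)] \emph{Output term.} $H(B)_{\bd{\mu}[\rho_A]}\leq H(B)_{\Delta_B\circ\bd{\mu}[\rho_A]}=H(\bsym{\rho}_A)$. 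This uses unitality of $\Delta_B$ and Lemma~\ref{lemma:BD and CD}, exactly as in the proof of Lemma~\ref{lemma:finiteness}.
\end{itemize}
This is not yet enough, because $I_c(\Delta_A[\rho_A],\bd{\mu})$ has output entropy $H(B)_{\bd{\mu}[\Delta_A\rho_A]}=H(\bsym{\rho}_A)$. To see this, note that $\bd{\mu}\circ\Delta_A=\Delta$ by Lemma~\ref{lemma:BD and CD}, so $\bd{\mu}[\Delta_A[\rho_A]]=\sum_k\rho(k)\pure{k}$. Therefore
\begin{align*}
I_c(\Delta_A[\rho_A],\bd{\mu})=H(\bsym{\rho}_A)-H\Bigl(\sum_k\rho(k)\pure{e_k}_F\Bigr)=\coh{\rho}{A},
\end{align*}
which is consistent with \eqref{eq:diagonal-coherent-information}. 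Combining (i) and (ii) gives
\begin{align*}
I_c(\rho_A,\bd{\mu})\leq H(\bsym{\rho}_A)-H(F)_{\bdc{\mu}[\Delta_A\rho_A]}=\coh{\rho}{A}.
\end{align*}
The subtraction is legitimate since all quantities are finite and bounded by $g(E)$.

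\textbf{Step 2: hypotheses of Lemma~\ref{lem:degradable-capacities}.}
\begin{itemize}
\item[(a)] \emph{Degradability of $\bd{\mu}$.} Define $\mathcal D\in\CPTP(B,F)$ by $\mathcal D[\omega]\defvar\bdc{\mu}[\omega]=\sum_k\bra{k}\omega\ket{k}\pure{e_k}_F$, viewing $B$ as a Fock space. This map is CPTP: it is a measurement in the Fock basis followed by preparation of $\ket{e_k}$. Because $\bd{\mu}$ preserves diagonal entries, $\mathcal D\circ\bd{\mu}=\bdc{\mu}$.
\item[(b)] \emph{Finite output entropy.} The supremum of $H(B)_{\bd{\mu}[\rho_A]}$ over $\Tr[\hat n\rho_A]\leq E$ is at most $g(E)<\infty$ by Lemma~\ref{lemma:finiteness}.
\end{itemize}
Lemma~\ref{lem:degradable-capacities} then gives $Q=P=\sup I_c(\rho_A,\bd{\mu})$ over states with $\Tr[\hat n\rho_A]\leq E$.

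\textbf{Step 3: identifying the supremum.} By Step 1, this supremum is at most the supremum of $\coh{\rho}{A}$ over photon-number distributions with $\sum_k k\rho(k)\leq E$. The reverse inequality holds because each such distribution is realized by the Fock-diagonal state $\sum_k\rho(k)\pure{k}$, which has the same mean energy and, by the identity in Step 1, has $I_c$ exactly equal to $\coh{\rho}{A}$.

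The main subtlety is infinite-dimensional bookkeeping: each entropy must be finite before taking differences, and $\mathcal D$ must be a genuine channel on trace-class operators. Both follow from Lemma~\ref{lemma:finiteness} together with normality of the Fock measurement.
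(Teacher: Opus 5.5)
Your proof is correct and follows essentially the same route as the paper. Both arguments rest on degradability via the map $\mathcal D[\tau_B]=\sum_k\bra{k}\tau_B\ket{k}\pure{e_k}_F$, the entropy bound of Lemma~\ref{lemma:finiteness}, and the pointwise reduction $I_c(\rho_A,\bd{\mu})\leq\coh{\rho}{A}$ from unitality of $\Delta_B$, Lemma~\ref{lemma:BD and CD}, and $\bdc{\mu}\circ\Delta_A=\bdc{\mu}$; your Step~3 only spells out the matching lower bound (Fock-diagonal inputs with the same mean photon number attain $\coh{\rho}{A}$), which the paper compresses into the remark that $\Delta_A$ preserves mean photon number.
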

\begin{proof}
The channel is degradable via
$\mathcal D[\tau_B]\defvar\sum_k\bra{k}\tau_B\ket{k}\pure{e_k}_F$.
Lemma~\ref{lemma:finiteness} verifies the entropy assumption in
Lemma~\ref{lem:degradable-capacities}.
By Lemma~\ref{lemma:BD and CD}, the identity
$\bdc{\mu}\circ\Delta_A=\bdc{\mu}$, and entropy increase under
Fock dephasing,
\begin{align}
    I_c(\rho_A,\bd{\mu})
       \leq H(A)_{\Delta_A[\rho_A]}-H(F)_{\bdc{\mu}[\rho_A]}
       =\coh{\rho}{A}.
\end{align}
Since $\Delta_A$ preserves mean photon number,
Lemma~\ref{lem:degradable-capacities} gives
\eqref{eq:bdc-unassisted-capacities}. The Fock-state ensemble
$\{\rho(k),\pure{k}\}_k$ also has private information
$\coh{\rho}{A}$.
\end{proof}

\subsection{Two-way LOCC-assisted energy-constrained capacities}
\label{sec:assisted-capacities}

We use the squared fidelity
$F(\rho,\sigma)\defvar\norm{\sqrt\rho\sqrt\sigma}_1^2$.
The following definitions specialize~\cite[Sec.~IV]{DSW18} to
$G=\hat n$ and energy parameter $P=E$.

\begin{definition}\label{def:adaptive-protocol}
(LOCC-assisted protocol and energy constraint)
An $n$-use protocol for $\mathcal N\in\CPTP(A,B)$ consists of local
initial states, $n$ channel uses, and countably decomposable LOCC maps
$\mathcal L^{(1)},\ldots,\mathcal L^{(n+1)}$.
LOCC means local operations and unlimited two-way public classical
communication. The first map prepares
$\rho^{(1)}_{A'_1A_1B'_1}$, separable across $A'_1A_1:B'_1$.
The subsequent states satisfy
\begin{align}
    \sigma^{(i)}_{A'_iB_iB'_i}
       &\defvar(\idmap_{A'_i}\otimes\mathcal N_{A_i\to B_i}
                           \otimes\idmap_{B'_i})
                    [\rho^{(i)}_{A'_iA_iB'_i}],
       &&i=1,\ldots,n,\label{eq:adaptive-channel-use}\\
    \rho^{(i+1)}_{A'_{i+1}A_{i+1}B'_{i+1}}
       &\defvar\mathcal L^{(i+1)}[\sigma^{(i)}_{A'_iB_iB'_i}],
       &&i=1,\ldots,n-1.\label{eq:adaptive-locc}
\end{align}
Here Alice holds $A'_iA_i$ before use $i$, Bob holds $B'_i$, and
$\omega\defvar\mathcal L^{(n+1)}[\sigma^{(n)}]$ is the final state.
The protocol satisfies the energy constraint $E$ if
\begin{align}
    \frac1n\sum_{i=1}^{n}\Tr[\hat n\rho^{(i)}_{A_i}]\leq E,
    \qquad
    \rho^{(i)}_{A_i}\defvar
         \tr[A'_iB'_i]{\rho^{(i)}_{A'_iA_iB'_i}}.
    \label{eq:adaptive-energy}
\end{align}
The input marginals are unconditional, averaged over the public
transcript and local randomness.
\end{definition}

\begin{definition}\label{def:assisted-quantum}
(Two-way LOCC-assisted energy-constrained quantum capacity)
An $(n,K,\hat n,E,\eps)$ quantum communication protocol is a
protocol from Definition~\ref{def:adaptive-protocol} whose final state
satisfies
\begin{align}
    F(\omega_{K_AK_B},\Phi^{(K)}_{K_AK_B})\geq1-\eps,
    \label{eq:assisted-entanglement-fidelity}
\end{align}
where $\dim K_A=\dim K_B=K$ and
\begin{align}
    \ket{\Phi^{(K)}}_{K_AK_B}\defvar\frac1{\sqrt K}
          \sum_{k=1}^{K}\ket{k}_{K_A}\otimes\ket{k}_{K_B},
    \qquad
    \Phi^{(K)}\defvar\pure{\Phi^{(K)}}.
\end{align}
A rate $r\geq0$ is achievable if, for every $\eps\in(0,1)$ and
$\delta>0$, such protocols exist for all sufficiently large $n$ with
$n^{-1}\log K\geq r-\delta$. Define
\begin{align}
    \Qt{\mathcal N,E}\defvar \Qt{\mathcal N,\hat n,E}
       \defvar\sup\{r\geq0:r\text{ is achievable}\}.
    \label{eq:assisted-quantum-capacity}
\end{align}
\end{definition}

\begin{definition}\label{def:private-state}
(Private state~\cite[Sec.~II~F]{DSW18})
A private state with key dimension $K$ has the form
\begin{align}
    \gamma_{K_AS_AK_BS_B}
       &\defvar U\bigl(\Phi^{(K)}_{K_AK_B}\otimes\tau_{S_AS_B}\bigr)U^\dagger,
       \label{eq:private-state}\\
    U&\defvar\sum_{j,k=1}^{K}\pure{j}_{K_A}\otimes\pure{k}_{K_B}
                         \otimes U^{jk}_{S_AS_B},
       \label{eq:twisting-unitary}
\end{align}
where $\tau\in\dop{=}(S_AS_B)$ and each $U^{jk}$ is unitary.
The shield systems $S_A,S_B$ are retained by Alice and Bob and may be
infinite dimensional. Measuring $K_A,K_B$ in the indicated bases gives
a uniform shared key independent of every purifying system.
\end{definition}

\begin{definition}\label{def:assisted-private}
(Two-way LOCC-assisted energy-constrained private capacity)
An $(n,K,\hat n,E,\eps)$ secret-key-agreement protocol is a
protocol from Definition~\ref{def:adaptive-protocol} whose final state,
with local shield systems retained, satisfies
\begin{align}
    F(\omega_{K_AS_AK_BS_B},\gamma_{K_AS_AK_BS_B})\geq1-\eps
    \label{eq:assisted-private-fidelity}
\end{align}
for a private state of key dimension $K$.
With achievable rates defined as in Definition~\ref{def:assisted-quantum},
using these protocols, set
\begin{align}
    \Pt{\mathcal N,E}\defvar \Pt{\mathcal N,\hat n,E}
       \defvar\sup\{r\geq0:r\text{ is achievable}\}.
    \label{eq:assisted-private-capacity}
\end{align}
This is also called the energy-constrained secret-key-agreement capacity.
Equivalently, after measuring the keys, the state
$\zeta_{K_AK_BF_1^nC}$ satisfies
\begin{align}
    F\!\left(\zeta_{K_AK_BF_1^nC},
       \frac1K\sum_{k=1}^K\pure{k}_{K_A}\otimes\pure{k}_{K_B}
                        \otimes\theta_{F_1^nC}\right)\geq1-\eps
    \label{eq:secret-key-security}
\end{align}
for some $\theta\in\dop{=}(F_1^nC)$, where $F_1^n$ are the channel
environments and $C$ is the complete public transcript held by Eve.
This equivalence uses a purified implementation retaining local
shield systems~\cite[Sec.~IV~B]{DSW18}.
\end{definition}

\section{Energy-constrained quantum and private capacities}
\label{sec:EE QP}
\subsection{A relative entropy bound for bosonic dephasing channels}
\label{sec:rel ent chain}
\begin{lemma}\label{lemma:div to Ic}
    Let $\bd{\mu}\in\CPTP(A,B)$ be a bosonic dephasing channel, and let $\rho\in\dop{=}(RA)$ satisfy
    \begin{align}
        \Tr[\hat{n}\rho_A]\leq E<\infty.
    \end{align}
    Then
    \begin{align}\label{eq:div to Ic}
        D\left((\idmap_R\otimes\bd{\mu})[\rho_{RA}]\Vert(\idmap_R\otimes\Delta)[\rho_{RA}]\right)\leq\coh{\rho}{A},
    \end{align}
    where $\Delta\in\CPTP(A,B)$ is the completely dephasing channel. Equality holds whenever $\rho_{RA}$ is pure.
\end{lemma}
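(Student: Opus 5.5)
The plan is to reduce to the case where $\rho_{RA}$ is pure, and then evaluate the divergence exactly through a pinching identity. For an arbitrary $\rho_{RA}$ I take a purification $\psi_{R\tilde RA}$. The channels $\bd{\mu}$ and $\Delta$ act only on $A$, so they commute with the partial trace over $\tilde R$. Data processing for the Umegaki relative entropy under $\tr[\tilde R]{\cdot}$ (valid for separable Hilbert spaces, by Lindblad's theorem) then gives
\begin{align*}
D\bigl(\bd{\mu}[\rho_{RA}]\Vert\Delta[\rho_{RA}]\bigr)\leq D\bigl(\bd{\mu}[\psi_{R\tilde RA}]\Vert\Delta[\psi_{R\tilde RA}]\bigr).
\end{align*}
The right-hand side involves the same marginal $\psi_A=\rho_A$, and hence the same distribution $\bsym{\rho}_A$. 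It therefore suffices to prove equality for pure inputs, which also settles the equality clause of the statement.

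For pure $\psi_{RA}$, write $\ket{\psi}=\sum_k\sqrt{\rho(k)}\ket{r_k}_R\ket{k}_A$ with normalized $\ket{r_k}$; this uses the Fock basis on $A$ and does not require a Schmidt decomposition. Set $\omega_{RB}\defvar\bd{\mu}[\psi_{RA}]$ and $\sigma_{RB}\defvar\Delta[\psi_{RA}]$. By Lemma~\ref{lemma:BD and CD}, $\sigma_{RB}=\Delta_B[\omega_{RB}]=\sum_k\rho(k)\pure{r_k}_R\otimes\pure{k}_B$. The key step is the pinching identity
\begin{align*}
D(\omega_{RB}\Vert\Delta_B[\omega_{RB}])=H(RB)_{\Delta_B[\omega]}-H(RB)_\omega.
\end{align*}
Formally this holds because $\log\Delta_B[\omega]$ is block diagonal with respect to the projectors $\id_R\otimes\pure{k}_B$. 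Hence $\Tr[\omega\log\Delta_B[\omega]]=\Tr[\Delta_B[\omega]\log\Delta_B[\omega]]$. The support condition holds because the pinching of a positive operator contains its support.

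Both entropies can now be computed directly. The state $\sigma_{RB}$ is a mixture of orthogonal pure states $\ket{r_k}\ket{k}$, so $H(RB)_\sigma=H(\bsym{\rho}_A)$. For $\omega_{RB}$, take the Stinespring isometry of $\bd{\mu}$, namely $\ket{k}_A\mapsto\ket{k}_B\ket{e_k}_F$, whose complement is the map in~\eqref{eq:bdc-complement}. The state on $RBF$ is then pure, so $H(RB)_\omega=H(F)=H\bigl(\sum_k\rho(k)\pure{e_k}_F\bigr)$. Subtracting gives exactly $\coh{\rho}{A}$. Both entropies are at most $g(E)<\infty$ by Lemma~\ref{lemma:finiteness}, because $H(RB)_\sigma=H(\bsym{\rho}_A)$ and $H(RB)_\omega=H(F)$. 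So the difference is well defined.

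I expect the main obstacle to be making the pinching identity rigorous in infinite dimensions, where $R$ may be infinite-dimensional and traces of unbounded logarithms need care. My first approach would be to use the finiteness of $H(RB)_\omega$ and $H(RB)_\sigma$ together with the decomposition $D(\omega\Vert\sigma)=-H(\omega)-\Tr[\omega\log\sigma]$. This decomposition is legitimate when $H(\omega)<\infty$ (Lindblad's extended definition). I would then compute $-\Tr[\omega\log\sigma]=\sum_k\bra{k}\omega_B\ket{k}\bigl(-\log\rho(k)\bigr)$ spectrally. Here $\sigma$ has the explicit eigenbasis $\{\ket{r_k}\ket{k}\}$, and the diagonal weights of $\omega$ in this basis are $\rho(k)$, since $\bd{\mu}$ preserves Fock diagonals. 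This yields $H(\bsym{\rho}_A)$ without any limiting argument. If that is delicate, the fallback is to truncate to finitely many Fock levels and pass to the limit using lower semicontinuity of $D$ and continuity of the entropies under the energy bound.
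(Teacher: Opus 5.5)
Your proposal is correct and follows essentially the same route as the paper. You purify and apply data processing under the partial trace, then use $\Delta_B\circ\bd{\mu}=\Delta$ to reduce the divergence to $H(RB)_{\Delta_B[\omega]}-H(RB)_\omega$. You evaluate these two entropies as $H(\bsym{\rho}_A)$ and $H(F)$, using the classical structure of the dephased state and the purity of the Stinespring output, with finiteness from Lemma~\ref{lemma:finiteness}.

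The one difference is minor. You compute the cross term $-\Tr[\omega\log\sigma]$ directly in the explicit eigenbasis $\{\ket{r_k}\ket{k}\}$ of $\sigma$, whereas the paper applies self-adjointness of $\Delta_B$ to bounded spectral truncations of $-\log\sigma$ and then uses monotone convergence. Both arguments rest on the same fact, namely that $\omega$ and $\sigma$ have identical diagonal blocks with respect to $\id_R\otimes\pure{k}_B$.
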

\begin{proof}
    Let $\psi_{\widehat{R}RA}=\ketbra{\psi}{\psi}_{\widehat{R}RA}$ be any purification of $\rho_{RA}$, and let
    \begin{align}
        \widetilde{\rho}_{\widehat{R}RB}\defvar(\idmap_{\widehat{R}R}\otimes\bd{\mu})[\psi_{\widehat{R}RA}],\qquad\widehat{\rho}_{\widehat{R}RB}\defvar(\idmap_{\widehat{R}R}\otimes\Delta)[\psi_{\widehat{R}RA}].
    \end{align}
    Setting $\Delta_A,\Delta_B$ to be completely dephasing {channels} on $A,B$ respectively, we get
    \begin{align}\label{eq:CD rho}
        (\idmap_{\widehat{R}R}\otimes\Delta_B)[\widetilde{\rho}_{\widehat{R}RB}]&=(\idmap_{\widehat{R}R}\otimes\Delta_B\circ\bd{\mu})[\psi_{\widehat{R}RA}]\nonumber\\
        &=(\idmap_{\widehat{R}R}\otimes\Delta)[\psi_{\widehat{R}RA}]\nonumber\\
        &=\widehat{\rho}_{\widehat{R}RB},
    \end{align}
    where the second line follows from Eq.~\eqref{eq:BD and CD}.
   Therefore, we have\footnote{The logarithmic steps are justified by applying
self-adjointness of $\Delta_B$ first to bounded spectral truncations
of $-\log\widehat{\rho}$, which are fixed by $\Delta_B$, and then
using monotone convergence. This gives
$\Tr[\widetilde{\rho}(-\log\widehat{\rho})]
=H(\widehat{R}RB)_{\widehat{\rho}}$.
The finiteness of both entropy terms is established below.}
    \begin{align}\label{eq:div to ent}
        D\left((\idmap_R\otimes\bd{\mu})[\rho_{RA}]\Vert(\idmap_R\otimes\Delta)[\rho_{RA}]\right)&\leq D\left((\idmap_{\widehat{R}R}\otimes\bd{\mu})[\psi_{\widehat{R}RA}]\Vert(\idmap_{\widehat{R}R}\otimes\Delta)[\psi_{\widehat{R}RA}]\right)\nonumber\\        &=D\left(\widetilde{\rho}_{\widehat{R}RB}\Vert\widehat{\rho}_{\widehat{R}RB}\right)\nonumber\\
        &=-H(\widehat{R}RB)_{\widetilde{\rho}}-\Tr\left[\widetilde{\rho}_{\widehat{R}RB}\log\widehat{\rho}_{\widehat{R}RB}\right]\nonumber\\
        &=-H(\widehat{R}RB)_{\widetilde{\rho}}-\Tr\left[\widetilde{\rho}_{\widehat{R}RB}\Delta_B[\log\widehat{\rho}_{\widehat{R}RB}]\right]\nonumber\\
        &=-H(\widehat{R}RB)_{\widetilde{\rho}}-\Tr\left[\Delta_B[\widetilde{\rho}_{\widehat{R}RB}]\log\widehat{\rho}_{\widehat{R}RB}\right]\nonumber\\
        &=-H(\widehat{R}RB)_{\widetilde{\rho}}-\Tr\left[\widehat{\rho}_{\widehat{R}RB}\log\widehat{\rho}_{\widehat{R}RB}\right]\nonumber\\
        &=H(\widehat{R}RB)_{\widehat{\rho}}-H(\widehat{R}RB)_{\widetilde{\rho}},
    \end{align}
    where the first line holds due to data processing, the fourth line follows from Eq.~\eqref{eq:CD rho} by noting that $\Delta_B^2[\widetilde{\rho}]=\Delta_B[\widehat{\rho}]$ and $\Delta_B^2[\widetilde{\rho}]=\Delta_B[\widetilde{\rho}]=\widehat{\rho}$, the fifth line holds since $\Delta_B$ is self-adjoint (i.e., $\Delta_B=\Delta_B^\dagger$), and the sixth line follows from Eq.~\eqref{eq:CD rho}. It now suffices to bound the last line of Eq.~\eqref{eq:div to ent}. Observe that since $\psi_{\widehat{R}RA}$ is pure, $\widehat{\rho}_{\widehat{R}RB}$ is classical on $B$, with pure conditional states on $\widehat{R}R$ register. Hence, there {exist} pure states $\{\ket{\phi_i}_{\widehat{R}R}\}_i$ such that
    \begin{align}
        \widehat{\rho}_{\widehat{R}RB}=\sum_i{\rho(i)}\ketbra{\phi_i}{\phi_i}_{\widehat{R}R}\otimes\ketbra{i}{i}_B.
    \end{align}
    Therefore, recalling $\bsym{\rho}_A=\{{\rho(i)}\}_i$ 
    \begin{align}
        H(\widehat{R}RB)_{\widehat{\rho}}=H(\bsym{\rho}_A).
    \end{align}
    Since $\Delta_A$ preserves photon number,
    \begin{align}
        \Tr\left[\hat{n}\Delta_A[\rho_A]\right]=\Tr[\hat{n}\rho_A]\leq E.
    \end{align}
    Applying Lemma~\ref{lemma:finiteness} gives
    \begin{align}
        H(\bsym{\rho}_A)=H(A)_{\Delta_A[{\rho_A}]}\leq g(E)<\infty.
    \end{align}
    Hence $H(\widehat{R}RB)_{\widehat{\rho}}<\infty$. To prove the finiteness of the other term, let $\widetilde{\rho}_{\widehat{R}RB{F}}$ be the pure purification of $\widetilde{\rho}_{\widehat{R}RB}$ obtained by applying a Stinespring dilation of $\bd{\mu}$ to $\psi_{\widehat{R}RA}$. Then we have
    \begin{align}
        H(\widehat{R}RB)_{{\widetilde{\rho}}}&=H({F})_{\widetilde{\rho}}\nonumber\\
        &=H({F})_{\bdc{\mu}[\rho_A]}\nonumber\\
        &\leq g(E)<\infty,
    \end{align}
    where in the first line we invoked the duality of von Neumann entropy, the second line follows from the definition of the complementary channel, and the last line follows from Lemma~\ref{lemma:finiteness}. Both entropy terms in~\eqref{eq:div to ent} are therefore finite.
    Using $\bdc{\mu}\circ\Delta_A=\bdc{\mu}$, we consequently obtain
    \begin{align}\label{eq:IC}
        H(\widehat{R}RB)_{\widehat{\rho}}-H(\widehat{R}RB)_{\widetilde{\rho}}&=H(\bsym{\rho}_A)-H({F})_{\bdc{\mu}\circ\Delta_A[\rho_A]}\nonumber\\
        &=\coh{\rho}{A}
    \end{align}
    Combining~\eqref{eq:IC} with~\eqref{eq:div to ent} {proves} Eq.~\eqref{eq:div to Ic}. If $\rho_{AR}$ is pure, then the inequality in~\eqref{eq:div to ent} is saturated and hence we obtain equality in~\eqref{eq:div to Ic}.
\end{proof}
\subsection{Quantum capacity}
\label{sec:QC}
{The next lemma is a special case of}~\cite[Theorem~2]{P91}. We include a proof for completeness.
\begin{lemma}
    \label{lemma:pinching chain rule}
    Let $\{\Pi_i\}_i$ be a countable family of mutually orthogonal projections satisfying $\sum_i\Pi_i=\id_A$ {in the strong operator topology}. Define the pinching channel $\Pmap\in\CPTP(A,A)$ as
    \begin{align}
        \Pmap[\tau]\defvar\sum_i\Pi_i\tau\Pi_i.
    \end{align}
    Then, for every $\rho\in\dop{=}(A)$ and $\sigma\in\tcop(A)$ satisfying $\Pmap[\sigma]=\sigma$, we have
    \begin{align}
        \label{eq:pinching chain rule}
        D(\rho\Vert\sigma)=D\left(\rho\Vert\Pmap[\rho]\right)+D\left(\Pmap[\rho]\Vert\sigma\right)
    \end{align}
\end{lemma}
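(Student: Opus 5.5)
The plan is to reduce the identity to a block-diagonal decomposition of $\log\sigma$ and then justify the formal computation in infinite dimensions. The finite-dimensional skeleton is short. Since $\Pmap[\sigma]=\sigma$, we can write $\sigma=\sum_i\Pi_i\sigma\Pi_i$, so $\sigma$ commutes with every $\Pi_i$. Every spectral function of $\sigma$ is then block diagonal, in particular $\log\sigma=\sum_i\Pi_i(\log\sigma)\Pi_i$ on $\supp(\sigma)$. Since $\Pmap$ is self-adjoint with respect to the Hilbert--Schmidt pairing, this gives
\begin{align*}
\Tr[\rho\log\sigma]=\Tr[\Pmap[\rho]\log\sigma].
\end{align*}
Similarly, $\Pmap[\rho]$ is block diagonal, so $\Tr[\rho\log\Pmap[\rho]]=\Tr[\Pmap[\rho]\log\Pmap[\rho]]$. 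Adding and subtracting $\Tr[\rho\log\Pmap[\rho]]$ in $D(\rho\Vert\sigma)$ then yields \eqref{eq:pinching chain rule}, since
\begin{align*}
D(\rho\Vert\sigma)=\Tr[\rho\log\rho-\rho\log\Pmap[\rho]]+\Tr[\Pmap[\rho]\log\Pmap[\rho]-\Pmap[\rho]\log\sigma].
\end{align*}

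Next come the support conditions, which decide when the two sides are $+\infty$. If $\supp(\rho)\subseteq\supp(\sigma)$, then $\supp(\Pmap[\rho])\subseteq\supp(\sigma)$, because the projection onto $\supp(\sigma)$ commutes with each $\Pi_i$. In addition, $\supp(\rho)\subseteq\supp(\Pmap[\rho])$ holds always, since $\Pmap[\rho]\geq\Pi_i\rho\Pi_i$ for each $i$ and any vector killed by $\Pmap[\rho]$ is killed by every $\rho^{1/2}\Pi_i$, hence by $\rho^{1/2}$. Conversely, if $\supp(\Pmap[\rho])\subseteq\supp(\sigma)$, then $\supp(\rho)\subseteq\supp(\sigma)$. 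So the left side is $+\infty$ exactly when the second term on the right is, and the first term on the right is never forced to be infinite by supports. This case analysis settles the identity whenever either side is infinite because of supports.

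The main obstacle is the infinite-dimensional justification, since $\Tr[\rho\log\rho]$ and $\Tr[\rho\log\sigma]$ may be infinite and the expressions cannot simply be subtracted. I would avoid differences of traces by working with spectral truncations, as in the footnote to Lemma~\ref{lemma:div to Ic}. Let $\sigma_m$ denote $\sigma$ restricted to its spectral subspace with eigenvalues in $[1/m,\infty)$. Then $\log\sigma_m$ is bounded and still block diagonal, because $\sigma$ commutes with each $\Pi_i$. Self-adjointness of $\Pmap$ therefore gives $\Tr[\rho\log\sigma_m]=\Tr[\Pmap[\rho]\log\sigma_m]$ rigorously, since $\Pmap^\dagger$ on bounded operators is $X\mapsto\sum_i\Pi_iX\Pi_i$ with strong convergence. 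Monotone convergence then passes the identity $-\Tr[\rho\log\sigma]=-\Tr[\Pmap[\rho]\log\sigma]$ to the limit in $[-\log\norm{\sigma}_\infty,\infty]$, treating the positive and negative parts of $\log$ separately. The same argument applied to $\Pmap[\rho]$ gives $-\Tr[\rho\log\Pmap[\rho]]=H(\Pmap[\rho])$.

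If the entropies are finite, the algebra above then goes through directly. For the general case I would use Lindblad's formula $D(\rho\Vert\sigma)=\sum_{j,k}|\langle a_j|b_k\rangle|^2(a_j\log a_j-a_j\log b_k+b_k-a_j)$, whose terms are all nonnegative, and choose the eigenbasis of $\sigma$ adapted to the blocks $\Pi_i$. This gives an unconditional decomposition of nonnegative sums, and regrouping the terms, which is allowed because they are nonnegative, produces the two divergences. Alternatively, I could approximate by the finite truncations $\Pmap_N$ onto $\Pi_1,\ldots,\Pi_N$ and invoke lower semicontinuity together with data processing. I expect that making this rearrangement of possibly infinite quantities rigorous is the delicate step. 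Citing~\cite[Theorem~2]{P91} covers it if needed.
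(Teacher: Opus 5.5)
Your finite-dimensional computation, your support analysis, and your spectral-truncation argument (giving $-\Tr[\rho\log\sigma]=-\Tr[\Pmap[\rho]\log\sigma]$ and $-\Tr[\rho\log\Pmap[\rho]]=H(\Pmap[\rho])$ in the extended reals) are correct. Together they settle the identity when the entropies are finite. The gap is the remaining case, in particular $H(\rho)=\infty$, and this is where the lemma says more than the formal algebra. There $D(\rho\Vert\sigma)$ can still be finite, and the claim is that $D(\rho\Vert\Pmap[\rho])$ and $D(\Pmap[\rho]\Vert\sigma)$ are finite and add up to it, which cannot be obtained from entropy differences. The paper needs this case, since Theorem~\ref{thrm:EC chain rule} applies the lemma to $(\idmap_R\otimes\bd{\mu})[\rho_{RA}]$ with an arbitrary memory $R$.

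Neither of your suggested routes closes it. In the Lindblad sums for $D(\rho\Vert\sigma)$, $D(\rho\Vert\Pmap[\rho])$ and $D(\Pmap[\rho]\Vert\sigma)$ the weights are $|\langle a_j|b_k\rangle|^2$, $|\langle a_j|c_l\rangle|^2$ and $|\langle c_l|b_k\rangle|^2$, where $\{c_l,\ket{c_l}\}$ is an eigendecomposition of $\Pmap[\rho]$. So the first sum is not a regrouping of the other two. Splitting $a\log a-a\log b+b-a$ as $[a\log a-a\log c+c-a]+[c\log c-c\log b+b-c]$ leaves a cross term $(a-c)(\log c-\log b)$ of indefinite sign, and you are back to the $\infty-\infty$ problem. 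Replacing $\Pmap$ by a pinching $\Pmap_N$ onto finitely many blocks does not reduce the dimension either. The blocks, and always the remainder block, are infinite-dimensional, so the same identity is needed for $\Pmap_N$. Citing~\cite[Theorem~2]{P91} is a legitimate reference, but then the infinite-dimensional step is cited rather than proved.

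The paper avoids this by truncating the Hilbert space rather than the family of blocks. It chooses an orthonormal basis adapted to the blocks and lets $Q^k$ project onto its first $k$ vectors, so $[Q^k,\Pi_i]=0$. The channel $\Phi^k$ compresses by $Q^k$ and moves the discarded weight to a flag. Then $\Phi^k\circ\Pmap=\Pmap^k\circ\Phi^k$ for a finite-dimensional pinching $\Pmap^k$, so the finite-dimensional identity holds for $\Phi^k[\rho]$ and $\Phi^k[\sigma]$. Data processing gives $\limsup_k D(\Phi^k[\tau]\Vert\Phi^k[\kappa])\le D(\tau\Vert\kappa)$, and the gentle operator lemma with lower semicontinuity gives the matching $\liminf$. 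Hence each of the three terms converges separately. Because the right-hand terms are bounded below (by $0$ and by $-\log\Tr\sigma$), the identity survives the limit even when terms are infinite.

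If you prefer to keep the Lindblad route, the missing ingredient is a positivity statement. Set $\eta(x)\defvar-x\log x$ and $X\defvar\eta(\Pmap[\rho])-\Pmap[\eta(\rho)]$, and work after your support reduction.
\begin{itemize}
\item In a block-adapted eigenbasis $\{\ket{b_k}\}$ of $\sigma$, the $k$-th Lindblad terms of $D(\rho\Vert\sigma)$ and $D(\Pmap[\rho]\Vert\sigma)$ differ by exactly $\bra{b_k}X\ket{b_k}$.
\item The $l$-th term of $D(\rho\Vert\Pmap[\rho])$ in a block-adapted eigenbasis of $\Pmap[\rho]$ equals $\bra{c_l}X\ket{c_l}$.
\item The operator Jensen inequality for the operator-concave $\eta$ and the unital pinching gives $X\geq0$.
\end{itemize}
So both sums equal the basis-independent $\Tr X$, and the decomposition is genuinely into nonnegative terms.
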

\begin{proof}
If $\sigma=0$, both sides are $+\infty$, so assume $\sigma\ne0$.
We first prove the finite-dimensional case. The inclusion
$\supp(\rho)\subseteq\supp(\Pmap[\rho])$ always holds.
If $\supp(\Pmap[\rho])\nsubseteq\supp(\sigma)$, pinching invariance
of $\sigma$ implies that both sides of the claimed identity are
$+\infty$. Otherwise, with logarithms evaluated on the relevant
supports, we have

    \begin{align}
        \Tr\left[\rho\left(\log\Pmap[\rho]-\log\sigma\right)\right]&=\Tr\left[\rho\left(\log\Pmap[\rho]-\log\Pmap[\sigma]\right)\right]\nonumber\\
        &=\Tr\left[\rho\Pmap[\log\Pmap[\rho]-\log\Pmap[\sigma]]\right]\nonumber\\
        &=\Tr\left[\Pmap[\rho]\left(\log\Pmap[\rho]-\log\Pmap[\sigma]\right)\right],
    \end{align}
    where the first line follows from the assumption, and the last line holds since $\Pmap$ is self-adjoint. With this in hand, we have
    \begin{align}\label{eq:finite-dim}
        D(\rho\Vert\sigma)&=D(\rho\Vert\Pmap[\rho])+\Tr\left[\rho\left(\log\Pmap[\rho]-\log\sigma\right)\right]\nonumber\\
        &=D(\rho\Vert\Pmap[\rho])+\Tr\left[\Pmap[\rho]\left(\log\Pmap[\rho]-\log\sigma\right)\right]\nonumber\\
        &=D(\rho\Vert\Pmap[\rho])+D(\Pmap[\rho]\Vert\sigma).
    \end{align}
    Having established the finite-dimensional case, we now move to the infinite-dimensional case. Set $\mathcal{H}_A^i\defvar\Pi_i\mathcal{H}_A$, then $\mathcal{H}_A=\bigoplus_i\mathcal{H}_A^i$. For each $i$, choose an orthonormal basis $\{\ket{e}_{i,m}\}_m$ of $\mathcal{H}_A^i$, and observe that their union is a countable orthonormal basis of $\mathcal{H}_A$. Enumerate this union as $\{\ket{e_j}\}_j$ and define the finite-rank orthogonal projections
    \begin{align}
        Q^k\defvar\sum_{j=1}^k\ketbra{e_{j}}{e_{j}}.
    \end{align}
    Note that since each $\ket{e_j}$ belongs to one of the $\mathcal{H}_A^i$, $[\ketbra{e_j}{e_j},\Pi_i]=0$, and hence $[Q^k,\Pi_i]=0$ for every ${i,k}$. Moreover, we have that $Q^k\leq Q^{k+1}$, and $Q^k$ converges strongly to $\id_A$ as $k\to\infty$. Let $\Phi^k\in\CPTP(A,A_k)$ be such that
    \begin{align}\label{eq:Phik}
        \Phi^k[\tau]\defvar Q^k\tau Q^k\oplus\Tr\left[(\id_A-Q^k)\tau\right]\ketbra{\perp}{\perp}.
    \end{align}
    Here $\mathcal{H}_{A_k}\defvar Q^k\mathcal{H}_A\oplus\operatorname{span}\{\ket{\perp}\}$, where $\ket{\perp}$ is a flag state. Define a {pinching} channel $\Pmap^k\in\CPTP(A_k,A_k)$ by
    \begin{align}
        \label{eq:Pk}
        \Pmap^k[\tau]\defvar\sum_j\Pi^k_{j}\tau\Pi^k_{j}+\Pi_{\perp}\tau\Pi_{\perp},
    \end{align}
    where $\Pi^k_{j}=\Pi_jQ^k$ and $\Pi_{\perp}=\ketbra{\perp}{\perp}$. Then for any ${\tau\in\tcop(A)}$
    \begin{align}
        \Phi^k\circ\Pmap[\tau]&=\sum_jQ^k\Pi_j\tau\Pi_jQ^k\oplus\Tr\left[(\id_A-Q^k)\Pmap[\tau]\right]\ketbra{\perp}{\perp}\nonumber\\
        &=\sum_j(Q^k\Pi_jQ^k)(Q^k\tau Q^k)(Q^k\Pi_jQ^k)\oplus\Tr\left[(\id_A-Q^k)\tau\right]\ketbra{\perp}{\perp}\nonumber\\
        &=\Pmap^k\circ\Phi^k[\tau],
    \end{align}
    where in the second line we used $\left(Q^k\right)^2=Q^k$ and $\sum_i\Pi_i=\id_A$. From the above relation, we can conclude that
    \begin{align}\label{eq:finit_cond}
        \Pmap^k\left[\Phi^k[\sigma]\right]=\Phi^k\circ\Pmap[\sigma]=\Phi^k[\sigma],\ \textit{and}\qquad\Pmap^k\circ\Phi^k[\rho]=\Phi^k\circ\Pmap[\rho] 
    \end{align}
    and therefore,
    \begin{align}\label{eq:truncated equality}
        D\left(\Phi^k[\rho]\Vert\Phi^k[\sigma]\right)&=D\left(\Phi^k[\rho]\Vert\Pmap^k\circ\Phi^k[\rho]\right)+D\left(\Pmap^k\circ\Phi^k[\rho]\Vert\Phi^k[\sigma]\right)\nonumber\\
        &=D\left(\Phi^k[\rho]\Vert\Phi^k\circ\Pmap[\rho]\right)+D\left(\Phi^k\circ\Pmap[\rho]\Vert\Phi^k[\sigma]\right),
    \end{align}
    where in the first line we applied~\eqref{eq:finite-dim}, and second line follows by substituting from~\eqref{eq:finit_cond}.
    It now suffices to show that for any $\tau\in\dop{=}(A)$ and $\eta\in\tcop(A)$, $\lim_{k\to\infty}D\left(\Phi^k[\tau]\Vert\Phi^k[\eta]\right)=D(\tau\Vert\eta)$. To do so, let $\epsilon_k\defvar\Tr[(\id_A-Q^k)\tau]$, then from gentle operator lemma~\cite[Lemma~9.4.2]{W17}, we have
    \begin{align}
        \norm{\Phi^k[\tau]-\left(\tau\oplus 0\right)}_1&=\norm{\left(Q^k\tau Q^k-\tau\right)\oplus\left(\epsilon_k-0\right)}_1\nonumber\\
        &=\norm{Q^k\tau Q^k-\tau}_1+\epsilon_k\nonumber\\
        &\leq 2\sqrt{\epsilon_k}+\epsilon_k.
    \end{align}
    Therefore, since $Q^k$ converges to $\id_A$ strongly, we get $\lim_{k\to\infty}\epsilon_k=0$, and hence
    \begin{align}\label{eq:convergence phi}
        \lim_{k\to\infty}\norm{\Phi^k[\tau]-\left(\tau\oplus 0\right)}_1=0.
    \end{align}
    {For $\eta\ne0$, applying the same argument to
    $\eta/\Tr\eta$ gives $\Phi^k[\eta]\to\eta\oplus0$ in trace norm;
    for $\eta=0$ this is immediate. With these two convergences, we obtain}
    \begin{align}
        D(\tau\Vert\eta)&=D(\tau\oplus0\Vert\eta\oplus0)\nonumber\\
        &\leq \liminf_{k\to\infty}{D\left(\Phi^k[\tau]\Vert\Phi^k[\eta]\right)}\nonumber\\
        &\leq \limsup_{k\to\infty}{D\left(\Phi^k[\tau]\Vert\Phi^k[\eta]\right)}\nonumber\\
        &\leq D(\tau\Vert\eta),
    \end{align}
    where the second line follows from joint lower semicontinuity of relative entropy~\cite[Lemma~4]{LS23} together with the convergence in~\eqref{eq:convergence phi}, and the last line is an application of data processing. {The first summand in~\eqref{eq:truncated equality} is nonnegative,
and the second is bounded below by $-\log\Tr\sigma$, uniformly in $k$.
We may therefore pass to the limit in that identity, also when a term
is infinite, to obtain the claim.}
\end{proof}
\begin{theorem}\label{thrm:EC chain rule}
    (Energy-constrained chain rule for bosonic dephasing channels)\par\noindent
    Let $\bd{\mu}\in\CPTP(A,B)$ be a bosonic dephasing channel, and let $\rho\in\dop{=}(RA)$ satisfy
    \begin{align}
        \Tr[\hat{n}\rho_A]\leq E<\infty.
    \end{align}
    Then, for any $\sigma\in\tcop(RA)$,
    \begin{align}\label{eq:EC chain rule}
        D\left((\idmap_R\otimes\bd{\mu})[\rho_{RA}]\Vert(\idmap_R\otimes\Delta)[\sigma_{RA}]\right)\leq D(\rho_{RA}\Vert\sigma_{RA})+\coh{\rho}{A},
    \end{align}
    where $\Delta\in\CPTP(A,B)$ is the completely dephasing channel.
\end{theorem}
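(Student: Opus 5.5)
We combine the pinching identity of Lemma~\ref{lemma:pinching chain rule} with the relative entropy bound of Lemma~\ref{lemma:div to Ic}, working on the output system $RB$. Write $\widetilde\rho_{RB}\defvar(\idmap_R\otimes\bd{\mu})[\rho_{RA}]$ and $\widehat\sigma_{RB}\defvar(\idmap_R\otimes\Delta)[\sigma_{RA}]$. Apply Lemma~\ref{lemma:pinching chain rule} on $RB$ with the projections $\Pi_k\defvar\id_R\otimes\pure{k}_B$, $k\geq0$. These are countably many, mutually orthogonal, and sum strongly to $\id_{RB}$. The associated pinching is $\Pmap=\idmap_R\otimes\Delta_B$.

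First, $\widehat\sigma_{RB}$ is fixed by $\Pmap$, since $\Delta_B\circ\Delta=\Delta$. Second, by Lemma~\ref{lemma:BD and CD}, $\Pmap[\widetilde\rho_{RB}]=(\idmap_R\otimes\Delta_B\circ\bd{\mu})[\rho_{RA}]=(\idmap_R\otimes\Delta)[\rho_{RA}]$. The pinching identity then gives
\begin{align*}
D(\widetilde\rho_{RB}\Vert\widehat\sigma_{RB})
 =D\bigl(\widetilde\rho_{RB}\Vert(\idmap_R\otimes\Delta)[\rho_{RA}]\bigr)
 +D\bigl((\idmap_R\otimes\Delta)[\rho_{RA}]\Vert(\idmap_R\otimes\Delta)[\sigma_{RA}]\bigr).
\end{align*}
Lemma~\ref{lemma:div to Ic} bounds the first term by $\coh{\rho}{A}$, which is finite because of the energy constraint. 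Data processing under the channel $\idmap_R\otimes\Delta$ bounds the second term by $D(\rho_{RA}\Vert\sigma_{RA})$. Adding the two bounds gives \eqref{eq:EC chain rule}.

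The delicate step is the bookkeeping with infinite values. If $D(\rho_{RA}\Vert\sigma_{RA})=+\infty$, the claim is trivial. Otherwise both summands are finite, so the identity and the bounds combine legitimately. This uses that Lemma~\ref{lemma:pinching chain rule} was proved in the extended-real sense, with the first summand nonnegative.

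We also check that data processing applies to the unnormalized $\sigma_{RA}\in\tcop(RA)$. Since $\idmap_R\otimes\Delta$ is trace preserving, we can rescale: $D(\rho\Vert c\,\tau)=D(\rho\Vert\tau)-\log c$ for $c>0$, and the shift $-\log c$ is the same on both sides. The case $\sigma=0$ makes both sides infinite. Finally, the identification of $A$ and $B$ as the same Fock space, implicit in Lemma~\ref{lemma:BD and CD}, is what lets the pinching act on the output $B$ while matching $\Delta$ applied to $A$.
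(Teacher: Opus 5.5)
Your proposal is correct and follows the paper's proof: you apply Lemma~\ref{lemma:pinching chain rule} on $RB$ with $\Pi_k=\id_R\otimes\pure{k}_B$. You then use Lemma~\ref{lemma:BD and CD} together with $\Delta_B\circ\Delta=\Delta$ to identify the pinched output and check invariance of the second argument, and bound the two summands by Lemma~\ref{lemma:div to Ic} and data processing, respectively. Your extra bookkeeping for infinite values and unnormalized $\sigma_{RA}$ is sound and only makes explicit what the paper leaves implicit.
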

\begin{proof}
Let $\Pi_i=\id_R\otimes\pure{i}_B$, with $\{\ket{i}_B\}_i$
the Fock basis. The associated pinching on $RB$ is
$\Pmap=\idmap_R\otimes\Delta_B$.
By Lemma~\ref{lemma:BD and CD} and $\Delta_B\circ\Delta=\Delta$,
Lemma~\ref{lemma:pinching chain rule} gives

    \begin{align}
        &D\left((\idmap_R\otimes\bd{\mu})[\rho_{RA}]\Vert(\idmap_R\otimes\Delta)[\sigma_{RA}]\right)\nonumber\\
        &\qquad\qquad=D\left((\idmap_R\otimes\bd{\mu})[\rho_{RA}]\Vert(\idmap_R\otimes\Delta)[\rho_{RA}]\right)+D\left((\idmap_R\otimes\Delta)[\rho_{RA}]\Vert(\idmap_R\otimes\Delta)[\sigma_{RA}]\right)\nonumber\\
        &\qquad\qquad\leq\coh{\rho}{A}+D(\rho_{RA}\Vert\sigma_{RA}),
    \end{align}
    where the inequality follows from Lemma~\ref{lemma:div to Ic} and data processing for the first and second terms, respectively.
\end{proof}
\begin{theorem}[Rains relative entropy chain rule for bosonic dephasing channels]
\label{lem:rains-increment}
Let $\mathcal{N}_\mu\in\CPTP(A,B)$ be a bosonic
dephasing channel, and let $\rho\in\mathrm D_{=}(A'AB')$ satisfy
\begin{equation}
    \Tr[\hat{n} \rho_A]\le E<\infty.
\end{equation}
Then,
\begin{equation}
    R(A':BB')_{\mathcal{N}_\mu[\rho_{A'AB'}]}
    \leq R(A'A:B')_\rho+\coh{\rho}{A}.
\end{equation}
\end{theorem}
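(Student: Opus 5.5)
The plan is to apply Theorem~\ref{thrm:EC chain rule} with $R=A'B'$ and the channel acting on $A$, to an arbitrary feasible $\sigma\in\PPT'(A'A:B')$. The aim is to show that $(\idmap_{A'B'}\otimes\Delta)[\sigma_{A'AB'}]$ lies in $\PPT'(A':BB')$. Granting this, for every such $\sigma$,
\begin{align}
    R(A':BB')_{\bd{\mu}[\rho]}
    \leq D\bigl(\bd{\mu}[\rho_{A'AB'}]\Vert\Delta[\sigma_{A'AB'}]\bigr)
    \leq D(\rho_{A'AB'}\Vert\sigma_{A'AB'})+\coh{\rho}{A}.
\end{align}
Taking the infimum over $\sigma$ gives the claim. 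If $R(A'A:B')_\rho=+\infty$ there is nothing to prove. The quantity $\coh{\rho}{A}$ is finite by Lemma~\ref{lemma:finiteness}, so no $\infty-\infty$ issue arises.

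The key step is the feasibility of $\Delta[\sigma]$, and it rests on the structure of the completely dephasing channel. Write $\Delta[\tau_A]=\sum_k\ket{k}_B\bra{k}\tau_A\ket{k}\bra{k}_B$. Then
\begin{align}
    \Delta[\sigma_{A'AB'}]=\sum_k\pure{k}_B\otimes\sigma^{(k)}_{A'B'},
    \qquad
    \sigma^{(k)}_{A'B'}\defvar\bra{k}_A\sigma_{A'AB'}\ket{k}_A .
\end{align}
We take the partial transpose on $BB'$ in the product basis of the Fock basis of $B$ and the fixed basis of $B'$. The projectors $\pure{k}_B$ are real diagonal, hence invariant under transposition. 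This gives
\begin{align}
    \Delta[\sigma]^{T_{BB'}}=\sum_k\pure{k}_B\otimes\bigl(\sigma^{(k)}\bigr)^{T_{B'}}
    =\Delta\bigl[\sigma^{T_{B'}}\bigr],
\end{align}
where the last equality holds because $\bra{k}_A(\cdot)\ket{k}_A$ commutes with transposition on $B'$. Since $\Delta$ is a channel, it is trace-norm contractive on trace-class operators, so $\norm{\Delta[\sigma]^{T_{BB'}}}_1\leq\norm{\sigma^{T_{B'}}}_1\leq1$. Positivity of $\Delta[\sigma]$ is clear.

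The main obstacle is making this rigorous in infinite dimensions. The partial transpose is initially only defined on Hilbert--Schmidt operators. We must check two things: that $\sigma^{T_{B'}}$ being trace class implies $\Delta[\sigma^{T_{B'}}]$ is trace class, and that it equals $\Delta[\sigma]^{T_{BB'}}$. I would verify the commutation identity on finite-rank operators, or matrix element by matrix element in the product basis, where both sides agree entrywise. I would then extend it by Hilbert--Schmidt continuity: both $\Delta$ and the transposes are Hilbert--Schmidt-norm isometries or contractions. This identifies the two operators, and trace-norm contractivity of the CPTP map $\Delta$ applied to the trace-class operator $\sigma^{T_{B'}}$ gives the bound. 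Note also that $\Delta[\sigma]$ has block-diagonal form over $k$ with blocks $(\sigma^{(k)})^{T_{B'}}$. Its trace norm is therefore $\sum_k\norm{(\sigma^{(k)})^{T_{B'}}}_1$, which offers a direct alternative check. Finally, the Rains quantity is basis independent, so using the Fock basis on $B$ is harmless.
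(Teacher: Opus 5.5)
Your proposal is correct and follows the paper's proof step for step. You apply Theorem~\ref{thrm:EC chain rule} with reference $A'B'$ and show $\Delta[\sigma]\in\PPT'(A':BB')$ via the identity $\Delta[\sigma]^{T_{BB'}}=\Delta[\sigma^{T_{B'}}]$ together with trace-norm contractivity of $\Delta$, then take the infimum over $\sigma$; your Hilbert--Schmidt extension argument is a slightly more careful version of what the paper does in one line (the block-diagonal form you mention at the end belongs to $\Delta[\sigma]^{T_{BB'}}$, not to $\Delta[\sigma]$, a harmless slip).
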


\begin{proof}
Let $\sigma_{A'AB'}\in\PPT'(A'A:B')$, and set
\begin{equation}
    \tau_{A'BB'}\defvar\Delta[\sigma_{A'AB'}],
\end{equation}
where $\Delta\in\CPTP(A,B)$ is the completely
dephasing channel. We first show that
$\tau\in\PPT'(A':BB')$. Taking the partial transpose
{on $BB'$, with the Fock basis on $B$}, we have
\begin{equation}
    \tau_{A'BB'}^{T_{BB'}}
    =\Delta\left[\sigma_{A'AB'}^{T_{B'}}\right],
\end{equation}
since the Fock projectors prepared by $\Delta$ are invariant
under transposition. {Feasibility of $\sigma$ makes $\sigma^{T_{B'}}$ trace class;
applying $\Delta$ therefore gives a trace-class operator.} Consequently,
\begin{align}
   \norm{\tau_{A'BB'}^{T_{BB'}}}_1
    &=
    \norm{\Delta\left[\sigma_{A'AB'}^{T_{B'}}\right]}_1
    \nonumber\\
    &\leq
    \norm{\sigma_{A'AB'}^{T_{B'}}}_1
    \le 1,
\end{align}
where the first inequality follows from trace-norm contractivity
on Hermitian operators. Since $\tau$ is also positive and trace
class, it is feasible in the output Rains optimization.

Applying Theorem~\ref{thrm:EC chain rule} with reference system $A'B'$, we therefore obtain
\begin{align}
    R(A':BB')_{\mathcal{N}_\mu[\rho_{A'AB'}]}
    &\leq D\left(\mathcal{N}_\mu[\rho_{A'AB'}]\Vert\Delta[\sigma_{A'AB'}]\right)
    \nonumber\\
    &\le D(\rho_{A'AB'}\Vert\sigma_{A'AB'})+\coh{\rho}{A}.
\end{align}
Taking the infimum over
$\sigma\in\PPT'(A'A:B')$ proves the claim.
\end{proof}
\begin{theorem}[Energy-constrained quantum capacities of bosonic dephasing channels]
\label{thrm:quantum-capacities}
Let $\bd{\mu}\in\CPTP(A,B)$ be a bosonic dephasing channel, and let
$0\leq E<\infty$. Then,
\begin{align}
 \Qt{\bd{\mu},E}=Q(\bd{\mu},E).
 \label{eq:quantum-capacities}
\end{align}
Here the assisted protocols satisfy
\begin{align}
 \frac1n\sum_{i=1}^n\Tr[\hat n\rho_{A_i}^{(i)}]\leq E,
\end{align}
where $\rho_{A_i}^{(i)}$ is the unconditional input marginal before
the $i$-th channel use.
\end{theorem}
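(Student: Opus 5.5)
The lower bound $\Qt{\bd{\mu},E}\geq Q(\bd{\mu},E)$ is immediate. An unassisted code is a special LOCC-assisted protocol: Alice prepares the encoded state with no classical communication, the $\mathcal L^{(i)}$ are trivial, and Bob decodes at the end. An unassisted code satisfying \eqref{eq:block-energy} for every logical input, in particular for Alice's half of the maximally entangled state, then satisfies \eqref{eq:adaptive-energy}. So the work is the converse $\Qt{\bd{\mu},E}\leq\sup_{\bsym\rho}\coh{\rho}{A}$, which equals $Q(\bd{\mu},E)$ by Corollary~\ref{cor:bdc-unassisted-capacities}.

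For the converse, fix an $(n,K,\hat n,E,\eps)$ protocol and track the Rains relative entropy $R(\text{Alice}:\text{Bob})$ along it.
\begin{itemize}
\item Initially $\rho^{(1)}$ is separable, hence PPT, so $R(A'_1A_1:B'_1)_{\rho^{(1)}}=0$.
\item Across use $i$, Theorem~\ref{lem:rains-increment} applied with $A'=A'_i$, $B'=B'_i$ gives $R(A'_i:B_iB'_i)_{\sigma^{(i)}}\leq R(A'_iA_i:B'_i)_{\rho^{(i)}}+\coh{\rho^{(i)}}{A_i}$. Here $\bsym\rho^{(i)}_{A_i}$ is the Fock distribution of the unconditional marginal. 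This step requires $\Tr[\hat n\rho^{(i)}_{A_i}]<\infty$, which follows from the average constraint, since each term is at most $nE$.
\item Across each LOCC map, $R$ is monotone non-increasing, as in~\cite{Rai01,DSW18}; this is the infinite-dimensional version for countably decomposable LOCC, since PPT$'$ is preserved by LOCC.
\end{itemize}
Telescoping gives $R(K_A:K_B)_\omega\leq\sum_{i=1}^n\coh{\rho^{(i)}}{A_i}$.

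Next, use concavity of $\bsym\rho\mapsto\coh{\rho}{A}$ on Fock distributions. The function equals $H(\bsym\rho)-H(\bdc{\mu}[\Delta_A\rho])$, which is concave because the channel is degradable: the coherent information of a degradable channel is concave in the input, and the input here is diagonal. With $\bar{\bsym\rho}=\frac1n\sum_i\bsym\rho^{(i)}$, whose mean photon number is at most $E$, we get $\frac1n\sum_i\coh{\rho^{(i)}}{A_i}\leq\coh{\bar\rho}{A}\leq\sup_{\bsym\rho:\sum_k k\rho(k)\leq E}\coh{\rho}{A}$.

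It remains to bound $\log K$ by the final Rains quantity. The standard meta-converse~\cite{Rai01,DSW18} gives $\log K\leq\frac{1}{1-\eps'}\bigl(R(K_A:K_B)_\omega+h_{\mathrm{bin}}(\eps')\bigr)$, with $\eps'$ depending only on $\eps$. It comes from the test $\Phi^{(K)}$: any $\sigma\in\PPT'$ has $\Tr[\Phi^{(K)}\sigma]\leq1/K$, combined with data processing to a binary measurement. Dividing by $n$ and taking $n\to\infty$ then $\eps\to0$ yields the strong-converse-free weak converse $r\leq\sup\coh{\rho}{A}$.

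I expect the main technical obstacle to be justifying these steps in infinite dimensions. The meta-converse itself is fine, since it needs only data processing. The monotonicity of $R$ under countably decomposable LOCC with infinite-dimensional memories requires checking that $\PPT'$ membership (trace-class partial transpose of norm at most $1$) is preserved by each local Kraus branch and by classical-communication conditioning. I would prove this directly: each LOCC branch is a product $(\mathcal A_j\otimes\mathcal B_j)$ of CP maps, and partial transposition intertwines $\mathcal B_j$ with its transpose-conjugate map, which is still CP and trace non-increasing summed over $j$. The result would then follow by trace-norm contractivity, exactly as in the proof of Theorem~\ref{lem:rains-increment}. Concavity over countable supports should follow from concavity of $I_c$ for degradable channels, using the finite entropies from Lemma~\ref{lemma:finiteness}.
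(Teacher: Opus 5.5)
Your proposal is correct and follows the paper's proof essentially step for step. The achievability is the trivial embedding of unassisted codes. The converse tracks $R$ through Theorem~\ref{lem:rains-increment} applied to the unconditional input marginals, uses LOCC monotonicity, concavity of $\coh{\rho}{A}$ via degradability, and the $\Phi^{(K)}$-test meta-converse. Your Kraus-level sketch that countably decomposable LOCC maps preserve $\PPT'$ fills in a step the paper only asserts, and it is sound: the transpose-conjugated local maps are again completely positive and the total map stays trace preserving.
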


\begin{proof}
For achievability, use the equivalent average-input entanglement-transmission
formulation recalled in Definition~\ref{def:unassisted-capacities}.
Alice locally prepares a maximally entangled state, applies the encoder
of an unassisted code to one half, and sends the encoded systems through
the channel. Bob then applies the decoder. This is a protocol of
Definition~\ref{def:assisted-quantum}, and its average input energy is
exactly the quantity in~\eqref{eq:block-energy}. Thus every unassisted
achievable rate is also achievable with two-way LOCC assistance, giving
\begin{align}\label{eq:lower2w}
    \Qt{\bd{\mu},E}\geq Q\left(\bd{\mu},E\right).
\end{align}
To prove the converse, {consider} an $(n,K,\hat n,E,\varepsilon)$ quantum communication protocol
as in Definition~\ref{def:assisted-quantum}. Let $\rho^{(i)}_{A_i'A_iB_i'}$ and
$\sigma^{(i)}_{A_i'B_iB_i'}$ be the states immediately before and after
the $i$-th channel use, respectively. Thus
\begin{align}
  \sigma^{(i)}_{A_i'B_iB_i'}
   &= ({\idmap_{A_i'}}\otimes\mathcal{N}_{\mu,A_i\to B_i}\otimes{\idmap_{B_i'}})
       [\rho^{(i)}_{A_i'A_iB_i'}], && i=1,\ldots,n,\\
  \rho^{(i+1)}_{A_{i+1}'A_{i+1}B_{i+1}'}
   &=\mathcal{L}^{(i+1)}[\sigma^{(i)}_{A_i'B_iB_i'}], && i=1,\ldots,n-1,\\
  \omega_{K_AK_B}&=\mathcal{L}^{(n+1)}[\sigma^{(n)}_{A_n'B_nB_n'}].
\end{align}
Here $\mathcal{L}^{(i+1)}\in\mathrm{CPTP}(A_i'B_iB_i',A_{i+1}'A_{i+1}B_{i+1}')$
and $\mathcal{L}^{(n+1)}\in\mathrm{CPTP}(A_n'B_nB_n',K_AK_B)$ are the intervening
and final LOCC maps, respectively.
Write
$\rho^{(i)}_{A_i}=\Tr_{A_i'B_i'}[\rho^{(i)}]$, and let
$\boldsymbol{\rho}^{(i)}_{A_i}$ denote its Fock-basis distribution.

LOCC maps preserve $\PPT'$. Thus, for every
$\zeta\in\PPT'(A_i':B_iB_i')$, data processing gives
\begin{align}
  R(A_{i+1}'A_{i+1}:B_{i+1}')_{\rho^{(i+1)}}
   \le D\bigl(\mathcal{L}^{(i+1)}[\sigma^{(i)}]\big\|
                   \mathcal{L}^{(i+1)}[\zeta]\bigr)
   \le D(\sigma^{(i)}\|\zeta).
\end{align}
Taking the infimum over $\zeta$ proves that these maps cannot increase
Rains relative entropy; the same argument applies to the final map.
Iterating Theorem~\ref{lem:rains-increment} and using the separability of $\rho^{(1)}$ therefore gives
\begin{align}
  R(K_A:K_B)_\omega\le\sum_{i=1}^n
                              I_c(\boldsymbol{\rho}^{(i)}_{A_i}).
\end{align}
Identifying each $A_i$ with $A$, set
$\tau_A:=n^{-1}\sum_{i=1}^n\Delta_A[\rho^{(i)}_{A_i}]$,
and let $\boldsymbol{\tau}_A$ denote its Fock-basis distribution.
The energy constraint implies $\Tr[\hat n\tau_A]\le E$.
Each input has finite mean photon number, since the nonnegative terms
in~\eqref{eq:adaptive-energy} have a finite sum. Concavity of coherent
information~\cite[Proposition~1]{WQ18}, justified by
Lemma~\ref{lemma:finiteness}, and
Corollary~\ref{cor:bdc-unassisted-capacities} give
\begin{align}
  R(K_A:K_B)_\omega
   \le\sum_{i=1}^n I_c(\boldsymbol{\rho}^{(i)}_{A_i})
   \le n I_c(\boldsymbol{\tau}_A)
   \le nQ(\mathcal{N}_\mu,E).
\end{align}

For every $\zeta\in\PPT'(K_A:K_B)$, H\"older's inequality gives
\begin{align}
  \Tr[\Phi^{(K)}\zeta]
   \le\bigl\|(\Phi^{(K)})^{T_{K_B}}\bigr\|_\infty
      \bigl\|\zeta^{T_{K_B}}\bigr\|_1
   \le\frac1K,
  \qquad \Tr\zeta\le1.
\end{align}
Since $\Tr[\Phi^{(K)}\omega]\ge1-\varepsilon$, applying data processing to the measurement
$\{\Phi^{(K)},{\id_{K_AK_B}}-\Phi^{(K)}\}$ yields
\begin{align}
  D(\omega\|\zeta)\ge \Tr[\Phi^{(K)}\omega]\log K-{\binh}\left(\Tr[\Phi^{(K)}\omega]\right)
                    \ge(1-\varepsilon)\log K-1,
\end{align}
where ${\binh}$ is the binary entropy. Taking the infimum over $\zeta$ and
using the preceding upper bound, we obtain
\begin{align}
  (1-\varepsilon)\log K-1
   \le R(K_A:K_B)_\omega
   \le nQ(\mathcal{N}_\mu,E).
\end{align}
Dividing by $n$ and taking $n\to\infty$ and ${\varepsilon\to0^+}$
in Definition~\ref{def:assisted-quantum} proves ${\Qt{\mathcal{N}_\mu,E}}\le Q(\mathcal{N}_\mu,E)$. Combining this with~\eqref{eq:lower2w} proves the claim.
\end{proof}

\subsection[Private and secret-key capacities]{Private and secret-{key} capacities}
\label{sec:private-capacities}

\begin{theorem}[Relative entropy of entanglement chain rule for bosonic dephasing channels]
\label{thrm:REE chain rule}
Let $\bd{\mu}\in\CPTP(A,B)$ be a bosonic dephasing channel, and let
$\rho\in\dop{=}(A'AB')$ satisfy
\begin{align}
    \Tr[\hat n\rho_A]\leq E<\infty.
\end{align}
Then,
\begin{align}
    E_R(A':BB')_{\bd{\mu}[\rho_{A'AB'}]}
    \leq E_R(A'A:B')_\rho+\coh{\rho}{A}.
    \label{eq:REE chain rule}
\end{align}
\end{theorem}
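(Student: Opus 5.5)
The plan mirrors the proof of Theorem~\ref{lem:rains-increment}, with the set $\PPT'$ replaced by $\SEP$. Fix any $\sigma_{A'AB'}\in\SEP(A'A:B')$ and put $\tau_{A'BB'}\defvar\Delta[\sigma_{A'AB'}]$, where $\Delta\in\CPTP(A,B)$ is the completely dephasing channel. Applying Theorem~\ref{thrm:EC chain rule} with reference system $R=A'B'$ gives
\begin{align}
    D\bigl(\bd{\mu}[\rho_{A'AB'}]\Vert\Delta[\sigma_{A'AB'}]\bigr)
    \leq D(\rho_{A'AB'}\Vert\sigma_{A'AB'})+\coh{\rho}{A}.
\end{align}
Therefore, once $\tau\in\SEP(A':BB')$ is established, the left side bounds $E_R(A':BB')_{\bd{\mu}[\rho]}$ from above. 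Taking the infimum over $\sigma$ then yields \eqref{eq:REE chain rule}.

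The main step is to verify that $\tau$ is separable across $A':BB'$. First take a product state $\sigma=\alpha_{A'A}\otimes\beta_{B'}$. In that case
\begin{align}
    \Delta[\alpha_{A'A}]=\sum_k p(k)\,\alpha^{(k)}_{A'}\otimes\pure{k}_B,
\end{align}
where $p(k)=\Tr[\pure{k}_A\alpha_{A'A}]$ and $\alpha^{(k)}_{A'}$ are the normalized conditional states. Hence
\begin{align}
    \tau=\sum_k p(k)\,\alpha^{(k)}_{A'}\otimes\bigl(\pure{k}_B\otimes\beta_{B'}\bigr).
\end{align}
This is a countable convex combination of product states across $A':BB'$. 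Its partial sums converge in trace norm, so $\tau$ lies in the trace-norm-closed convex hull, i.e.\ in $\SEP(A':BB')$. By linearity, $\Delta$ maps convex combinations of product states into $\SEP(A':BB')$.

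For a general $\sigma\in\SEP(A'A:B')$, which is only a trace-norm limit of such combinations, use the fact that $\Delta$ is a channel and hence trace-norm contractive. It therefore maps a trace-norm convergent sequence of separable combinations to a convergent sequence in $\SEP(A':BB')$. Since that set is closed, $\Delta[\sigma]\in\SEP(A':BB')$.

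The closure point is the only place where the infinite-dimensional setting needs care, and I expect it to be the main obstacle, though a mild one. No extra finiteness issues arise beyond those already handled in Theorem~\ref{thrm:EC chain rule} and Lemma~\ref{lemma:finiteness}. If $E_R(A'A:B')_\rho=+\infty$, the bound is trivial, since $\coh{\rho}{A}\leq g(E)<\infty$ and the right side is $+\infty$.
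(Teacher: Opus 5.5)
Your proposal is correct and follows essentially the same route as the paper's proof. Both show that $\Delta$ maps $\SEP(A'A:B')$ into $\SEP(A':BB')$: first for product states, where the output is a countable convex combination of products, then by linearity, and finally by trace-norm continuity of $\Delta$ together with closedness of the separable set. Both then apply Theorem~\ref{thrm:EC chain rule} with reference system $A'B'$ and take the infimum over $\sigma$.
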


\begin{proof}
Let $\sigma\in\SEP(A'A:B')$, and set
$\tau_{A'BB'}\defvar\Delta[\sigma_{A'AB'}]$, where
$\Delta\in\CPTP(A,B)$ is the completely dephasing channel.
We first show that $\tau\in\SEP(A':BB')$. If
$\sigma_{A'AB'}=\sigma_{A'A}\otimes\sigma_{B'}$, then
\begin{align}
    \tau_{A'BB'}
    =\sum_{k=0}^{\infty}
      \bra{k}_A\sigma_{A'A}\ket{k}_A
      \otimes\pure{k}_B\otimes\sigma_{B'}.
\end{align}
After normalizing the nonzero summands, this is a countable convex
combination of product states across $A':BB'$.
By linearity, the same holds for finite convex combinations of
product states. Every separable state is a trace-norm limit of such
combinations, so continuity of $\Delta$ and closedness of the
separable set give $\Delta[\sigma]\in\SEP(A':BB')$ for every
$\sigma\in\SEP(A'A:B')$.
Applying Theorem~\ref{thrm:EC chain rule} with reference system $A'B'$ gives
\begin{align}
    E_R(A':BB')_{\bd{\mu}[\rho_{A'AB'}]}
    &\leq D\bigl(\bd{\mu}[\rho_{A'AB'}]\big\|\Delta[\sigma_{A'AB'}]\bigr)
      \nonumber\\
    &\leq D(\rho_{A'AB'}\|\sigma_{A'AB'})+\coh{\rho}{A}.
\end{align}
Taking the infimum over $\sigma\in\SEP(A'A:B')$ proves the claim.
\end{proof}

\begin{theorem}[Energy-constrained private capacities of bosonic dephasing channels]
\label{thrm:private-capacities}
Let $\bd{\mu}\in\CPTP(A,B)$ be a bosonic dephasing channel, and let
$0\leq E<\infty$. Then,
\begin{align}
    \Pt{\bd{\mu},E}=P(\bd{\mu},E)=Q(\bd{\mu},E).
    \label{eq:private-capacities}
\end{align}
\end{theorem}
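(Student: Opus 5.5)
The chain of relations I will establish is
\begin{align*}
Q(\bd{\mu},E)\leq\Qt{\bd{\mu},E}\leq\Pt{\bd{\mu},E}\leq Q(\bd{\mu},E),
\end{align*}
together with $P(\bd{\mu},E)=Q(\bd{\mu},E)$, which is already given by Corollary~\ref{cor:bdc-unassisted-capacities}. The first inequality is \eqref{eq:lower2w}. For the second, an entanglement-distillation protocol is a secret-key protocol: the maximally entangled state $\Phi^{(K)}$ is a private state with trivial shield. The fidelity condition \eqref{eq:assisted-entanglement-fidelity} therefore implies \eqref{eq:assisted-private-fidelity}, so $\Qt{\bd{\mu},E}\leq\Pt{\bd{\mu},E}$. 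Alternatively, I can bypass $\Qt{}$ and note directly that $P\leq\Pt{}$, since an unassisted private code with Alice preparing the inputs is a special LOCC protocol. Either route gives the lower bound. The real content is the converse $\Pt{\bd{\mu},E}\leq Q(\bd{\mu},E)$.

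For the converse, I follow the proof of Theorem~\ref{thrm:quantum-capacities}, replacing Rains relative entropy with $E_R$ and working with the purified implementation in which shields are retained locally. Fix an $(n,K,\hat n,E,\eps)$ secret-key-agreement protocol, with states $\rho^{(i)}$ and $\sigma^{(i)}$ as before. I use three facts.
\begin{itemize}
\item[(i)] Countably decomposable LOCC maps send $\SEP$ into $\SEP$. Together with data processing, this means $E_R$ is nonincreasing under each $\mathcal L^{(i+1)}$ and under the final map.
\item[(ii)] The initial state $\rho^{(1)}$ is separable, so $E_R(A_1'A_1:B_1')_{\rho^{(1)}}=0$.
\item[(iii)] Theorem~\ref{thrm:REE chain rule} applies at each use, since each $\rho^{(i)}_{A_i}$ has finite mean photon number by the averaged constraint.
\end{itemize}
Iterating these gives
\begin{align*}
E_R(K_AS_A:K_BS_B)_\omega\leq\sum_{i=1}^n\coh{\rho^{(i)}}{A_i}\leq n\,\coh{\tau}{A}\leq nQ(\bd{\mu},E).
\end{align*}
Here $\tau_A$ is the averaged dephased input. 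The middle inequality uses concavity of coherent information via Lemma~\ref{lemma:finiteness}, and the last uses Corollary~\ref{cor:bdc-unassisted-capacities}.

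The remaining step is a lower bound on $E_R$ of a state close to a private state. I expect this to be the main obstacle, because the shields may be infinite dimensional and the usual continuity bounds in terms of total dimension are unavailable. I will instead use a test-operator argument. Take the twisting unitary $U$ from \eqref{eq:twisting-unitary} and the projection
\begin{align*}
\Pi_\gamma\defvar U\bigl(\Phi^{(K)}\otimes\id_{S_AS_B}\bigr)U^\dagger .
\end{align*}
Then:
\begin{itemize}
\item[(a)] Since $\gamma\leq\Pi_\gamma$ in the sense of support, $F(\omega,\gamma)\geq1-\eps$ gives $\Tr[\Pi_\gamma\omega]\geq1-\eps$.
\item[(b)] For every separable $\zeta$, $\Tr[\Pi_\gamma\zeta]\leq1/K$. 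It suffices to check this on product states $\zeta_A\otimes\zeta_B$ and then pass to convex combinations and trace-norm limits. For a product state, write
\begin{align*}
\Tr[\Pi_\gamma\zeta]=\frac1K\sum_{j,k}\Tr\bigl[(\ketbra{j}{k}\otimes U^{jj\dagger}U^{kk}\otimes\ldots)\,\zeta\bigr],
\end{align*}
and bound the right-hand side by Cauchy--Schwarz on the $K$-dimensional key blocks. This is the standard argument of Horodecki et al., and it carries over verbatim to separable Hilbert spaces because it involves only finitely many key indices.
\end{itemize}
Data processing under the two-outcome measurement $\{\Pi_\gamma,\id-\Pi_\gamma\}$ then gives
\begin{align*}
E_R(\omega)\geq(1-\eps)\log K-1 .
\end{align*}
Combining this with the upper bound, dividing by $n$, and letting $n\to\infty$ and then $\eps\to0^+$ yields $\Pt{\bd{\mu},E}\leq Q(\bd{\mu},E)$, which closes the chain.
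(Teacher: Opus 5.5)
Your proposal is correct and follows essentially the same route as the paper. Achievability is $Q(\bd{\mu},E)\leq\Qt{\bd{\mu},E}\leq\Pt{\bd{\mu},E}$, with the maximally entangled state viewed as a private state with trivial shields. The converse iterates Theorem~\ref{thrm:REE chain rule} using LOCC monotonicity of $E_R$ and the averaging/concavity step, then lower-bounds $E_R$ via the privacy test $\Pi_\gamma$, the bound $\Tr[\Pi_\gamma\zeta]\leq 1/K$ on separable $\zeta$, and data processing under $\{\Pi_\gamma,\id-\Pi_\gamma\}$. The only slip is cosmetic: the expansion of $\Pi_\gamma$ carries $U^{jj}(U^{kk})^\dagger$ rather than $U^{jj\dagger}U^{kk}$, which is harmless because each term is bounded by the trace norm of the corresponding key block.
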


\begin{proof}
A maximally entangled state is a private state with trivial shields.
Thus, Eq.~\eqref{eq:lower2w} and
Corollary~\ref{cor:bdc-unassisted-capacities} give
\begin{align}
    \Pt{\bd{\mu},E}\geq\Qt{\bd{\mu},E}
    \geq Q(\bd{\mu},E)=P(\bd{\mu},E).
    \label{eq:private-achievability}
\end{align}

For the converse, consider an $(n,K,\hat n,E,\eps)$ protocol from
Definition~\ref{def:assisted-private}, with final state
$\omega_{K_AS_AK_BS_B}$ satisfying $F(\omega,\gamma)\geq1-\eps$
for a private state $\gamma$ of key dimension $K$.

Use the states $\rho^{(i)}$ and $\sigma^{(i)}$ from
Eqs.~\eqref{eq:adaptive-channel-use}--\eqref{eq:adaptive-locc}.
The intervening and final LOCC maps preserve separability, so data
processing implies that they cannot increase $E_R$.
Iterating Theorem~\ref{thrm:REE chain rule} from the separable initial
state and applying the same averaging argument as in the quantum
converse gives
\begin{align}
    E_R(K_AS_A:K_BS_B)_\omega
    &\leq\sum_{i=1}^n I_c(\bsym{\rho}^{(i)}_{A_i})
    \leq nQ(\bd{\mu},E).
    \label{eq:private-ER-upper}
\end{align}
\begin{samepage}
Let $U$ be the twisting unitary of $\gamma$ in
Definition~\ref{def:private-state}, and define the privacy test~\cite{WTB17}
\begin{align}
    \Pi_\gamma\defvar
    U\bigl(\Phi^{(K)}_{K_AK_B}\otimes\id_{S_AS_B}\bigr)U^\dagger.
    \label{eq:private-test}
\end{align}
Then $\Tr[\Pi_\gamma\gamma]=1$.
For a product state
$\zeta=\zeta_{K_AS_A}\otimes\zeta_{K_BS_B}$, write
$Z_{ij}\defvar\bra{ii}\zeta\ket{jj}$, where the matrix elements
are taken on the key registers $K_AK_B$.
H\"older's inequality gives
$\norm{Z_{ij}}_1\leq\sqrt{\Tr[Z_{ii}]\Tr[Z_{jj}]}$.
Expanding $\Pi_\gamma$ and using this bound, we obtain
\begin{align}
    \Tr[\Pi_\gamma\zeta]
    &\leq\frac1K\sum_{i,j=1}^K\norm{Z_{ij}}_1\nonumber\\
    &\leq\frac1K\left(\sum_{i=1}^K\sqrt{\Tr[Z_{ii}]}\right)^2
      \leq\frac1K.
\end{align}
\end{samepage}
The last inequality follows from Cauchy--Schwarz and
$\Tr[Z_{ii}]=\bra{i}\zeta_{K_A}\ket{i}\bra{i}\zeta_{K_B}\ket{i}$.
By convexity and trace-norm continuity,
$\Tr[\Pi_\gamma\zeta]\leq1/K$ for every
$\zeta\in\SEP(K_AS_A:K_BS_B)$. This argument also applies to
infinite-dimensional shields.

Set $p\defvar\Tr[\Pi_\gamma\omega]$.
Then $p\geq F(\omega,\gamma)\geq1-\eps$ by monotonicity of fidelity
under the measurement $\{\Pi_\gamma,\id-\Pi_\gamma\}$.
For any $\zeta\in\SEP(K_AS_A:K_BS_B)$, put
$q\defvar\Tr[\Pi_\gamma\zeta]$.
Since $q\leq1/K$, data processing for the same measurement gives
\begin{align}
    D(\omega\|\zeta)
    &\geq p\log\frac{p}{q}+(1-p)\log\frac{1-p}{1-q}\nonumber\\
    &\geq p\log K-\binh(p)
     \geq(1-\eps)\log K-1,
\end{align}
where $\binh$ is the binary entropy.
Taking the infimum over $\zeta$ and using
Eq.~\eqref{eq:private-ER-upper}, we obtain
\begin{align}
    (1-\eps)\log K-1
    \leq E_R(K_AS_A:K_BS_B)_\omega
    \leq nQ(\bd{\mu},E).
\end{align}
Dividing by $n$ and taking $n\to\infty$ and $\eps\to0^+$
in Definition~\ref{def:assisted-private} proves
$\Pt{\bd{\mu},E}\leq Q(\bd{\mu},E)$.
Together with Eq.~\eqref{eq:private-achievability}, this proves the claim.
\end{proof}
The following corollary shows that the bounds in {Theorems}~\ref{lem:rains-increment} and~\ref{thrm:REE chain rule} are attained.
\begin{corollary}[Tightness of the entanglement chain rules]
\label{cor:entanglement-chain-rule-tightness}
Let $\bd{\mu}\in\CPTP(A,B)$ be a bosonic dephasing channel, and let
$\rho_A\in\dop{=}(A)$ be diagonal in the Fock basis and satisfy
\begin{align}
    \Tr[\hat n\rho_A]\leq E<\infty.
\end{align}
Let $\ket{\psi}_{RA}$ be the canonical purification of $\rho_A$ in the
Fock basis,
\begin{align}
    \ket{\psi}_{RA}
    \defvar\sum_{k=0}^{\infty}\sqrt{\rho(k)}\,
               \ket{k}_R\otimes\ket{k}_A,
\end{align}
where $\psi_{RA}\defvar\pure{\psi}_{RA}$. Then,
\begin{align}
    R(R:B)_{(\idmap_R\otimes\bd{\mu})[\psi_{RA}]}
    =E_R(R:B)_{(\idmap_R\otimes\bd{\mu})[\psi_{RA}]}
    =\coh{\rho}{A}.
    \label{eq:entanglement-chain-rule-tightness}
\end{align}
\end{corollary}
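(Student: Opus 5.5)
Upper bound. I will first show that both quantities are at most $\coh{\rho}{A}$, using the chain rules with a trivial $B'$. Apply Theorem~\ref{lem:rains-increment} with $A'=R$ and $B'$ trivial to the state $\psi_{RA}$, whose input marginal is $\rho_A$. This gives
\begin{align}
    R(R:B)_{\bd{\mu}[\psi_{RA}]}\leq R(RA:\,\cdot\,)_\psi+\coh{\rho}{A}.
\end{align}
When $B'$ is trivial, every state is product across $RA:B'$. The term $R(RA:\cdot)_\psi$ then vanishes by choosing $\sigma=\psi$, which is feasible in $\PPT'$ because partial transposition on a one-dimensional system is trivial. In the same way, Theorem~\ref{thrm:REE chain rule} gives $E_R(R:B)\leq\coh{\rho}{A}$. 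Alternatively, I can argue directly: $(\idmap_R\otimes\Delta)[\psi_{RA}]=\sum_k\rho(k)\pure{k}_R\otimes\pure{k}_B$ is separable, hence also in $\PPT'$. Lemma~\ref{lemma:div to Ic} then bounds its divergence from the output by $\coh{\rho}{A}$, with equality since $\psi$ is pure. Since $\SEP\subseteq\PPT'$, we also have $R\leq E_R$. It therefore suffices to prove $R(R:B)\geq\coh{\rho}{A}$.

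Lower bound. The standard tool is the hashing-type bound $R(R:B)_\omega\geq I(R\rangle B)_\omega=H(B)_\omega-H(RB)_\omega$. For the output $\omega_{RB}=\bd{\mu}[\psi_{RA}]$, the marginal is $\omega_B=\bd{\mu}[\rho_A]=\rho_A$, which is diagonal, so $H(B)_\omega=H(\bsym{\rho}_A)$. Purity of $\psi$ gives $H(RB)_\omega=H(F)_{\bdc{\mu}[\rho_A]}$. Hence $-H(R|B)_\omega=\coh{\rho}{A}$, and both entropies are finite by Lemma~\ref{lemma:finiteness}.

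To justify $D(\omega_{RB}\Vert\sigma)\geq -H(R|B)_\omega$ for every $\sigma\in\PPT'(R:B)$, I will follow Rains' argument. Write $D(\omega\Vert\sigma)=D(\omega\Vert\id_R\otimes\omega_B)+\Tr[\omega(\log(\id_R\otimes\omega_B)-\log\sigma)]$. A cleaner route is to exploit the symmetry of the state instead. The output $\omega_{RB}=\sum_{k,\ell}\sqrt{\rho(k)\rho(\ell)}(T_\mu)_{k\ell}\ketbra{kk}{\ell\ell}$ is supported on $\mathrm{span}\{\ket{kk}\}$ and is invariant under the twirl $\ket{k}_R\ket{\ell}_B\mapsto e^{i\theta(k-\ell)}\ket{k}_R\ket{\ell}_B$, averaged over $\theta$. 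This twirl is a local unitary average, so it is LOCC. It preserves $\PPT'$ and $\SEP$, and by data processing we may assume $\sigma$ is twirl-invariant.

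I will then apply a second pinching, onto $\Pi=\sum_k\pure{kk}$ and its complement. This projection is invariant under the twirl, and $\omega=\Pi\omega\Pi$. Data processing under the pinching $\{\Pi,\id-\Pi\}$ reduces the problem to $D(\omega\Vert\Pi\sigma\Pi)$ with $\Pi\sigma\Pi=\sum_{k,\ell}s_{k\ell}\ketbra{kk}{\ell\ell}$. The PPT condition $\norm{\sigma^{T_B}}_1\leq1$ yields $\sum_{k\ne\ell}|s_{k\ell}|+\sum_k s_{kk}\leq1$, because $\sigma^{T_B}$ contains the entries $s_{k\ell}\ketbra{k\ell}{\ell k}$ as off-diagonal blocks on disjoint $2\times2$ subspaces. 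In particular $\Pi\sigma\Pi\leq\Delta$-part in the sense that $\norm{\Pi\sigma\Pi}_\infty$-type bounds force $\log(\Pi\sigma\Pi)\leq\log$ of a diagonal operator with diagonal $\leq s_{kk}$, and a concavity-of-log argument should give $D(\omega\Vert\Pi\sigma\Pi)\geq D(\omega\Vert\Delta[\omega])+D(\bsym{\rho}\Vert \bsym s)\geq \coh{\rho}{A}$.

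Main obstacle. The step I expect to be hardest is making this final inequality rigorous. The fallback I would try first is the general bound: for $\sigma\in\PPT'$, $\Tr[\sigma_{RB}]\leq1$ and $\sigma\leq\id_R\otimes\sigma_B$ fail in general, but Rains' inequality $D(\omega\Vert\sigma)\geq -H(R|B)_\omega$ holds via $\sigma^{T_B}$ and the operator inequality $\log$ monotonicity of Rains. I would prove it in finite dimensions and extend it to the infinite-dimensional case by the truncation $\Phi^k$ and the lower semicontinuity used in Lemma~\ref{lemma:pinching chain rule}, using finiteness of $H(R|B)$.
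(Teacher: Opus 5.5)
\textbf{Upper bound.} This part is correct and is essentially the paper's argument. The dephased state $(\idmap_R\otimes\Delta)[\psi_{RA}]$ is separable, hence Rains-feasible, and the pure-input equality in Lemma~\ref{lemma:div to Ic} gives $R\le E_R\le\coh{\rho}{A}$.

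\textbf{Lower bound: where the gap is.} Reducing the lower bound to $R(R:B)_\omega\ge I(R\rangle B)_\omega=\coh{\rho}{A}$ is a sound alternative to the paper's Lemma~\ref{lemma:Rains MM}. However, you never prove that inequality, and the route you develop ends in a false statement. The claimed bound $D(\omega\Vert\Pi\sigma\Pi)\ge D(\omega\Vert\Delta_B[\omega])+D(\bsym{\rho}\Vert\bsym{s})$ fails already for $\rho_A=\frac12(\pure{0}+\pure{1})$ with a symmetric phase density satisfying $(T_\mu)_{01}=t\in(0,1)$. Take $\sigma=\frac{1}{2(1+t)}\bigl(\pure{00}+\pure{11}+t\ketbra{00}{11}+t\ketbra{11}{00}\bigr)$, which has $\norm{\sigma^{T_B}}_1=1$. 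The left side equals $\log(1+t)$, while the right side equals $1-\binh\bigl(\frac{1+t}{2}\bigr)+\log(1+t)$. The underlying error is that $\Pi\sigma\Pi$ cannot be dominated by a diagonal operator with entries $s_{kk}$. Two side claims are also wrong:
\begin{itemize}
\item Twirl invariance only makes $\sigma^{T_B}$ block diagonal in the total photon number $k+\ell$. Entries of $\sigma$ outside $\Pi$ couple your $2\times2$ subspaces, so your bound on $\sum|s_{k\ell}|$ does not follow block by block, although it is still true.
\item $\Tr\sigma\le\norm{\sigma^{T_B}}_1\le1$ does hold on $\PPT'$, contrary to your remark.
\end{itemize}

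\textbf{A repair.} No twirl is needed. For any $\sigma\in\PPT'(R:B)$, set $s_{k\ell}\defvar\bra{kk}\sigma\ket{\ell\ell}$. Pairing $\sigma^{T_B}$ with the unitary $W=\sum_{k,\ell}e^{-i\arg s_{k\ell}}\ketbra{\ell k}{k\ell}$ gives $\sum_{k,\ell}|s_{k\ell}|=\Tr[W\sigma^{T_B}]\le\norm{\sigma^{T_B}}_1\le1$. Diagonal dominance then yields $\Pi\sigma\Pi\le\sum_k r_k\pure{kk}$, where $r_k\defvar\sum_\ell|s_{k\ell}|$ and $\sum_k r_k\le1$. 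The pinching $\{\Pi,\id-\Pi\}$ fixes $\omega$. Data processing under this pinching, together with antimonotonicity of $D$ in its second argument, gives $D(\omega\Vert\sigma)\ge D(\omega\Vert\Pi\sigma\Pi)\ge H(\bsym{\rho}_A)-H(RB)_\omega+D(\bsym{\rho}_A\Vert\bsym{r})\ge\coh{\rho}{A}$. This works directly in countable dimension. It replaces both the imported finite-dimensional optimality result and the truncation in Lemma~\ref{lemma:Rains MM}.

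\textbf{Your fallback.} The general inequality is also true, but it needs an explicit operator bound. Write the Jordan decomposition $\sigma^{T_B}=P-N$ and set $\tau_B\defvar(P_B+N_B)^{T}$. Then $\sigma\le\id_R\otimes\tau_B$, which follows from complete copositivity of $X\mapsto\Tr[X]\id-X$ and complete positivity of $X\mapsto\Tr[X]\id+X^{T}$. If you then truncate, lower semicontinuity of entropy bounds $H(R_kB_k)$ in the wrong direction. The limit should instead be taken through $D(\Phi^k[\omega]\Vert\Delta_{B_k}\Phi^k[\omega])$ and joint lower semicontinuity of $D$, as in the paper's appendix.
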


\begin{proof}
The output $(\idmap_R\otimes\bd{\mu})[\psi_{RA}]$ is maximally
correlated. Its $R$-marginal has entropy $H(A)_\rho$, which is finite
by the energy constraint.
By Lemma~\ref{lemma:BD and CD}, its dephased state is
\begin{align}
    (\idmap_R\otimes\Delta)[\psi_{RA}]
    =\sum_{k=0}^{\infty}\rho(k)\,\pure{k}_R\otimes\pure{k}_B.
\end{align}
Lemma~\ref{lemma:Rains MM} and the pure-input equality in
Lemma~\ref{lemma:div to Ic} therefore give
\begin{align}
    R(R:B)_{(\idmap_R\otimes\bd{\mu})[\psi_{RA}]}
    &=D\bigl((\idmap_R\otimes\bd{\mu})[\psi_{RA}]
        \big\|(\idmap_R\otimes\Delta)[\psi_{RA}]\bigr)\nonumber\\
    &=\coh{\rho}{A}.
\end{align}
Every separable state is Rains-feasible: this holds for finite
mixtures of product states and extends to their trace-norm closure
by Hilbert--Schmidt continuity of partial transposition and lower
semicontinuity of the trace norm.
Since $(\idmap_R\otimes\Delta)[\psi_{RA}]$ is separable, we have

\begin{align}
    R(R:B)_{(\idmap_R\otimes\bd{\mu})[\psi_{RA}]}
    \leq E_R(R:B)_{(\idmap_R\otimes\bd{\mu})[\psi_{RA}]}
    \leq\coh{\rho}{A},
\end{align}
which proves Eq.~\eqref{eq:entanglement-chain-rule-tightness}.

Taking $A'=R$ and $B'$ trivial makes both input entanglement measures
zero. Thus the bounds in Theorems~\ref{lem:rains-increment}
and~\ref{thrm:REE chain rule} are attained for every such input marginal
$\rho_A$.
\end{proof}
\section{Conclusion}
\label{sec:conclusion}

In this work, we showed that the quantum and private capacities of
single-mode bosonic dephasing channels are unchanged by two-way
classical assistance under an average photon-number constraint.
For every finite energy constraint, all four capacities are given by the
same optimization of $\coh{\rho}{A}$ over photon-number distributions
satisfying the constraint. This equality holds for adaptive protocols
in which the input energy may vary between channel uses.

Our proof is based on a relative entropy comparison between the
bosonic dephasing channel and the completely dephasing channel.
Combining this comparison with a pinching identity, we derived chain
rules that bound the increase in the Rains relative entropy and the
relative entropy of entanglement across a channel use by
$\coh{\rho}{A}$. Applying these bounds successively, together with monotonicity
under LOCC and concavity of the coherent information, shows that
interactive classical assistance does not increase either capacity
under the average energy constraint.
Furthermore, both chain rules are attained by the canonical purifications of
finite-energy Fock-diagonal inputs. The analysis includes countably supported
photon-number distributions and allows infinite-dimensional local
memories and private-state shields.

Our characterization leaves an optimization over photon-number
distributions. Identifying optimal or near-optimal distributions
for specific phase-noise models would clarify how the capacity
depends on the energy constraint.

Another direction is to study bosonic loss-dephasing channels,
which combine photon loss and dephasing and have been investigated
in~\cite{LXJR22,LW23,MSGL24,WHLX26,HLSW26}.
It would be useful to determine whether the methods developed here can sharpen existing bounds on their
energy-constrained two-way assisted quantum and private capacities. One possible approach is to combine our methods with those developed for the pure-loss channel~\cite{Pir26,FMLG26}.

\section{Acknowledgment}
AA conducted research at the Institute for Quantum
Computing, at the University of Waterloo, which is supported by Innovation, Science, and Economic Development
Canada. Support was also provided by NSERC under the
Discovery Grants Program, Grant No. 341495.
\section{Statement on the use of artificial intelligence}
ChatGPT Astra was used extensively as a research and writing assistant. The authors formulated and directed the research, while ChatGPT Astra provided substantial assistance in developing the main solution, completing proofs, and identifying mathematical and expository errors. The majority of the manuscript was written by the authors, with ChatGPT Astra used iteratively for editing and refinement. All AI-assisted content was reviewed and verified by the authors, who take full responsibility for the final work.
\printbibliography
\appendix
\section{Rains relative entropy of maximally correlated states}
\label{app:maximally-correlated}
\label{app: rains}
\begin{lemma}\label{lemma:Rains MM}
    (Rains relative entropy for countable maximally correlated states)
    Let $\rho_{AB}\in\dop{=}(AB)$ be a maximally correlated state satisfying $H(A)_\rho<\infty$. Set
    \begin{align}
        \hat{\rho}_{AB}\defvar\Delta_B[\rho_{AB}],
    \end{align}
    where $\Delta_B\in\CPTP(B,B)$ is the completely dephasing channel {in any orthonormal basis extending the paired family on $B$}. Then,
    \begin{align}
        \label{eq:Rains MM}
        R(A:B)_\rho=D(\rho_{AB}\Vert\hat{\rho}_{AB})=H(A)_\rho-H(AB)_\rho.
    \end{align}
\end{lemma}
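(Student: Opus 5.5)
The plan is to show two inequalities: an upper bound from feasibility of $\hat\rho_{AB}$, and a matching lower bound by a Rains-type argument adapted to countable dimension. A maximally correlated state has the form
\begin{align}
    \rho_{AB}=\sum_{j,k}c_{jk}\ketbra{a_j}{a_k}_A\otimes\ketbra{b_j}{b_k}_B
\end{align}
for orthonormal families $\{\ket{a_j}\}$, $\{\ket{b_j}\}$ and a positive trace-one matrix $(c_{jk})$. Its dephased state is then $\hat\rho_{AB}=\sum_j c_{jj}\pure{a_j}\otimes\pure{b_j}$.

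\textbf{Upper bound and entropy identity.} The state $\hat\rho$ is a countable convex combination of product states, hence separable. As noted in the proof of Corollary~\ref{cor:entanglement-chain-rule-tightness}, it is therefore Rains-feasible, so $R(A:B)_\rho\le D(\rho\Vert\hat\rho)$. For the second equality I would repeat the argument of Lemma~\ref{lemma:div to Ic}: $\Delta_B$ is self-adjoint and fixes bounded spectral truncations of $\log\hat\rho$, so $-\Tr[\rho\log\hat\rho]=H(AB)_{\hat\rho}$. Moreover $H(AB)_{\hat\rho}=H((c_{jj})_j)=H(A)_{\Delta_A\rho_A}$, and $\rho_A=\sum_j c_{jj}\pure{a_j}$ is already diagonal, so this equals $H(A)_\rho<\infty$. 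Finiteness of $H(AB)_\rho\le H(AB)_{\hat\rho}$ then makes $D(\rho\Vert\hat\rho)=H(A)_\rho-H(AB)_\rho$ a genuine difference of finite numbers.

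\textbf{Lower bound.} Given $\sigma\in\PPT'(A:B)$, I would apply the local projection channel
\begin{align}
    \mathcal M[\tau]\defvar \Pi\tau\Pi+\bigl(\id-\Pi\bigr)\tau\bigl(\id-\Pi\bigr),
    \qquad \Pi\defvar\sum_{j}\pure{a_j}\otimes\pure{b_j},
\end{align}
followed by dephasing the off-diagonal blocks. This map is LOCC-implementable (a pinching onto the correlated subspace), fixes $\rho$, and preserves $\PPT'$. So $D(\rho\Vert\sigma)\ge D(\rho\Vert\sigma')$ with $\sigma'$ supported on $\Pi$, where $\sigma'=\sum_{jk}s_{jk}\ketbra{a_ja_j\!}{a_kb_k}$-type entries. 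The PPT' condition gives $\sum_{j\neq k}|s_{jk}|+\sum_j s_{jj}\le\norm{\sigma'^{T_B}}_1\le1$, since the partial transpose makes the off-diagonal terms act on orthogonal $2\times2$ blocks $\{a_jb_k,a_kb_j\}$.

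Next I would bound $D(\rho\Vert\sigma')\ge D(\rho\Vert\hat\rho)$. Rains' finite-dimensional route uses joint convexity: twirl with the local diagonal phase unitaries $\sum_j e^{i\theta_j}\ket{a_j}\bra{a_j}\otimes e^{-i\theta_j}\ket{b_j}\bra{b_j}$, which fix $\rho$. In infinite dimensions I would do this via the finite truncations $\Phi^k$ of Lemma~\ref{lemma:pinching chain rule}. On each truncation I would directly invoke the known finite-dimensional result $R=D(\rho\Vert\hat\rho)$ for maximally correlated states, together with the fact that $D(\Phi^k\rho\Vert\Phi^k\sigma)$ converges to $D(\rho\Vert\sigma)$ and $D(\Phi^k\rho\Vert\widehat{\Phi^k\rho})$ converges to $D(\rho\Vert\hat\rho)$ by lower semicontinuity plus data processing.

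\textbf{Main obstacle.} The delicate point is this limit. The truncation $\Phi^k$ must be chosen so that $\Phi^k[\sigma]$ stays PPT' in the truncated system; projecting locally onto $\operatorname{span}\{a_j\}_{j\le k}\otimes\operatorname{span}\{b_j\}_{j\le k}$ and sending the remainder to a product flag works. We also need $\Phi^k[\rho]$ to remain maximally correlated, up to a flag term of weight $\to0$. I would handle the flag by writing $\Phi^k\rho$ as a direct sum and using the finite result on the main block, then passing to the limit as in the last step of Lemma~\ref{lemma:pinching chain rule}.
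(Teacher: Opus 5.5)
Your proposal is correct and follows the paper's proof. The upper bound comes from Rains-feasibility of $\hat\rho_{AB}$; the paper simply notes $\hat\rho_{AB}^{T_B}=\hat\rho_{AB}$. The entropy identity comes from the computation in Lemma~\ref{lemma:div to Ic}. The lower bound uses the local flagged truncations $\Phi^k_A\otimes\Phi^k_B$, which commute with $T_B$ and so preserve $\PPT'$, and which commute with dephasing. They also send $\rho_{AB}$ to a finite-dimensional maximally correlated state: the flag pair $\ket{\perp\perp}$ is just one more correlated pair, so no separate treatment of the flag is needed. Data processing, the cited finite-dimensional result and joint lower semicontinuity then finish the argument.

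The phase-twirl paragraph is unnecessary once you invoke the finite-dimensional result directly. If you keep it, note that the pinching $\{\Pi,\id-\Pi\}$ by itself is not LOCC: for three levels it already maps some product states to NPT states. Only its composition with the off-diagonal dephasing, which is the local phase twirl, is LOCC.
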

\begin{proof}
    We begin by showing the second equality. Let $\rho_{AB}={\sum_{i,j}}c_{ij}\ketbra{ii}{jj}$, {where the sums run over finite or countable paired orthonormal families. Then}
    \begin{align}\label{eq:hatrho}
        \hat{\rho}_{AB}=\Delta_B[\rho_{AB}]={\sum_i}c_{ii}\ketbra{ii}{ii}.
    \end{align}
    Noting that $\Delta_B$ is unital, we have
    \begin{align}
        H(AB)_\rho\leq H(AB)_{\hat{\rho}}=H(A)_\rho<\infty,
    \end{align}
    {where the first inequality follows from entropy monotonicity under unital channels~\cite[Theorems~2.2 and~3.3]{LB13}.} Consequently, by {a calculation similar to that in}~\eqref{eq:div to ent}, we obtain
    \begin{align}
        D(\rho_{AB}\Vert\hat{\rho}_{AB})&=H(AB)_{\hat{\rho}}-{H(AB)_\rho}\nonumber\\
        &=H(A)_\rho-H(AB)_\rho.
    \end{align}
    We now move to prove the first equality in~\eqref{eq:Rains MM}. First, note that since $\hat{\rho}^{T_B}_{AB}=\hat{\rho}_{AB}$ and $\Tr[\hat{\rho}_{AB}]=1$, $\hat{\rho}_{AB}\in\PPT'(A:B)$. Hence
    \begin{align}\label{eq:Rupper}
        R(A:B)_\rho\leq D(\rho_{AB}\Vert\hat{\rho}_{AB}).
    \end{align}
    For the reverse inequality, we follow the approach used in the proof of Lemma~\ref{lemma:pinching chain rule}, by extending the finite-dimensional result to the infinite-dimensional setting by a limiting argument. For every finite-dimensional maximally correlated state $\eta_{AB}$,
    \begin{align}\label{eq:finiteR}
        R(A:B)_\eta\geq D(\eta_{AB}\Vert\Delta_B[\eta_{AB}]).
    \end{align}
    The result follows from the optimality of the dephased state in the Rains optimization~\cite[Sec.~V, discussion following Eq.~(50)]{FWTD19}. We extend this bound to the countable setting by a limiting argument. Define
    \begin{align}
        Q^k_A\defvar{\sum_{i<k}}\ketbra{i}{i}_A\ , \quad Q^k_B\defvar{\sum_{i<k}}\ketbra{i}{i}_B{.}
    \end{align}
    {These projections exhaust the paired families.} Let $\Phi_A^k\in\CPTP(A,A_k)$ and $\Phi_B^k\in\CPTP(B,B_k)$, be defined as
    \begin{align}
        \label{eq:phiA,phiB}
        &{\Phi_A^k}[\tau]\defvar Q_A^k\tau Q_A^k\oplus\Tr\left[(\id_A-Q_A^k)\tau\right]\ketbra{\perp}{\perp},\qquad{\Phi_B^k}[\tau]\defvar Q_B^k\tau Q_B^k\oplus\Tr\left[(\id_B-Q_B^k)\tau\right]\ketbra{\perp}{\perp}.
    \end{align}
    Here $\mathcal{H}_{A_k}\defvar Q_A^k\mathcal{H}_A\oplus\text{span}\{\ket{\perp}\}$ and $\mathcal{H}_{B_k}\defvar Q_B^k\mathcal{H}_B\oplus\text{span}\{\ket{\perp}\}$. Direct calculation shows 
    \begin{align}
        &\left(\Phi^k_A\otimes\Phi_B^k\right)[\rho_{AB}]=\sum_{i,j<k}c_{ij}\ketbra{ii}{jj}\oplus\Tr\left[(\id_{AB}-Q_A^k\otimes Q_B^k)\rho_{AB}\right]\ketbra{\perp\perp}{\perp\perp},
    \end{align}
     Observe that the above output state is a maximally correlated state, and furthermore,
     \begin{align}\label{eq:dephasek}
         \left(\Phi^k_A\otimes\Phi_B^k\right)\circ\Delta_B[\rho_{AB}]=\Delta_{B_k}\left[\left(\Phi^k_A\otimes\Phi_B^k\right)[\rho_{AB}]\right],
     \end{align}
     where $\Delta_{B_k}$ is a completely dephasing channel on $B_k$. The channels $\Phi_A^k$ and $\Phi_B^k$ commute with {transposition in bases extending the paired families, including the flags at the output}. Thus, for every $\sigma_{AB}\in\PPT'(A:B)$,
     \begin{align}
         \norm{\left(\left(\Phi^k_A\otimes\Phi_B^k\right)[\sigma_{AB}]\right)^{T_{B_k}}}_1&=\norm{\left(\Phi^k_A\otimes\Phi_B^k\right)\left[\sigma_{AB}^{T_B}\right]}_1\nonumber\\
         &\leq\norm{\sigma_{AB}^{T_B}}_1\leq 1,
     \end{align}
     where the first inequality follows from the contractivity of the trace norm. Therefore, we have
     \begin{align}
         D(\rho_{AB}\Vert\sigma_{AB})&\geq D\left(\left(\Phi^k_A\otimes\Phi_B^k\right)[\rho_{AB}]\Vert\left(\Phi^k_A\otimes\Phi_B^k\right)[\sigma_{AB}]\right)\nonumber\\
         &\geq R(A_k:B_k)_{\left(\Phi^k_A\otimes\Phi_B^k\right)[\rho_{AB}]},
     \end{align}
     where the first line follows from data processing. Taking infimum over $\sigma_{AB}\in\PPT'(A:B)$ and using Eqs.~\eqref{eq:finiteR} and~\eqref{eq:dephasek} we obtain
     \begin{align}\label{eq:Rlowerk}
         R(A:B)_\rho\geq R(A_k:B_k)_{\left(\Phi^k_A\otimes\Phi_B^k\right)[\rho_{AB}]}\geq D\left(\left(\Phi^k_A\otimes\Phi_B^k\right)[\rho_{AB}]\Vert\left(\Phi^k_A\otimes\Phi_B^k\right)[\hat{\rho}_{AB}]\right).
     \end{align}
     {We now pass to the limit.} Let $\epsilon_k\defvar\Tr\left[(\id_{AB}-Q_A^k\otimes Q_B^k)\rho_{AB}\right]$. {We compare states on the original local spaces with the flag vectors adjoined, using zero extensions.} Then, using the gentle operator lemma~\cite[Lemma~9.4.2]{W17}, we get
     \begin{align}
         \norm{\left(\Phi^k_A\otimes\Phi_B^k\right)[\rho_{AB}]-(\rho_{AB}\oplus 0)}_1&=\norm{\left(Q_A^k\otimes Q_B^k\right)\rho_{AB}\left(Q_A^k\otimes Q_B^k\right)-\rho_{AB}}_1+\epsilon_k\nonumber\\
         &\leq 2\sqrt{\epsilon_k}+\epsilon_k.
     \end{align}
     Similarly
    \begin{align}\label{eq:convergence hatphiAphiB}
        {\norm{\left(\Phi^k_A\otimes\Phi_B^k\right)[\hat{\rho}_{AB}]-(\hat{\rho}_{AB}\oplus 0)}_1=2\epsilon_k\longrightarrow0.}
    \end{align}
     {Since $\epsilon_k=1-\sum_{i<k}c_{ii}\to0$, we obtain}
    \begin{align}\label{eq:convergence phiAphiB}
        &\lim_{k\to\infty}\norm{\left(\Phi^k_A\otimes\Phi_B^k\right)[\rho_{AB}]-(\rho_{AB}\oplus 0)}_1=0,\qquad\lim_{k\to\infty}\norm{\left(\Phi^k_A\otimes\Phi_B^k\right)[\hat{\rho}_{AB}]-(\hat{\rho}_{AB}\oplus 0)}_1=0.        
    \end{align}
    With this in hand, we obtain
    \begin{align}
        D(\rho_{AB}\Vert\hat{\rho}_{AB})&=D(\rho_{AB}\oplus 0\Vert\hat{\rho}_{AB}\oplus 0)\nonumber\\
        &\leq\liminf_{k\to\infty}{D\left(\left(\Phi^k_A\otimes\Phi_B^k\right)[\rho_{AB}]\Vert\left(\Phi^k_A\otimes\Phi_B^k\right)[\hat{\rho}_{AB}]\right)}\nonumber\\
        &{\leq R(A:B)_\rho},
    \end{align}
    where the first inequality follows from the joint lower semicontinuity of relative entropy~\cite[Lemma~4]{LS23} and the second from~\eqref{eq:Rlowerk}. Combining this with~\eqref{eq:Rupper} proves the claim.
\end{proof}
\end{document}